\pgfplotsset{compat=1.10}
\newcommand{\T}{\mathcal{T}}
\newcommand{\NN}{\mathbb{N}}
\newcommand{\OPT}{\mathrm{opt}}
\global\long\def\xc{\mathrm{xc}}%
\global\long\def\SAT{\mathrm{SAT}}%
\global\long\def\C{\mathcal{C}}%
 \global\long\def\Jd{J_{\mathrm{dum}}}%
\global\long\def\Md{M_{\mathrm{dum}}}%
\newcommand{\cmulti}{\mathcal{C}_{*}}
\newtheorem{theorem}{Theorem}
\newtheorem{lemma}{Lemma}
\newtheorem{corollary}{Corollary}
\newtheorem{proposition}{Proposition}
\newcommand{\vvcom}[1]{\todo[color=green!25!white]{victor: #1}}
\newlength{\algofontsize}
\begin{document}
	\algrenewcommand\algorithmicrequire{\textbf{Input:}}
	\algrenewcommand\algorithmicensure{\textbf{Output:}}
	
	\title{On the extension complexity of scheduling
		\thanks{This work was partially supported by GA \v{C}R project 17-09142S and FONDECYT Regular grant 1170223.}
	 }
	
	\author{
	Hans Raj Tiwary \thanks{Department of Applied Mathematics, Charles University. {\tt hansraj@kam.mff.cuni.cz}}
	\and 			
	Victor Verdugo \thanks{Department of Mathematics, London School of Economics and Political Science. {\tt v.verdugo@lse.ac.uk}}
	\thanks{Institute of Engineering Sciences, Universidad de O'Higgins. {\tt victor.verdugo@uoh.cl}}
	\and 
	Andreas Wiese \thanks{Department of Industrial Engineering, Universidad de Chile. {\tt awiese@dii.uchile.cl}}
	}
\date{\vspace{-1em}}
\maketitle
\begin{abstract}
Linear programming is a powerful method in combinatorial optimization
with many applications in theory and practice. 
For solving a linear program quickly
it is desirable to have a formulation of small size for the given
problem. 
A useful approach for this is the construction of an {\it extended formulation}, which
is a linear program in a higher dimensional space whose projection yields the original linear program.
For many problems it is known that a small extended formulation cannot
exist. However, most of these problems are either $\mathsf{NP}$-hard (like TSP), or only
quite complicated polynomial time algorithms are known for them (like
for the matching problem). 
In this work we study the {\it minimum makespan
problem} on identical machines in which we want to assign a set of
$n$ given jobs to $m$ machines in order to minimize the maximum
load over the machines. 
We prove that the canonical formulation for this
problem has extension complexity $2^{\Omega(n/\log n)}$, even if
each job has size 1 or 2 and the optimal makespan is 2. 
This is a
case that a trivial greedy algorithm can solve optimally! 
For the
more powerful {\it configuration integer program} %, on the negative side 
we even prove a lower bound of $2^{\Omega(n)}$. 
On the other hand, we show that there is an abstraction
of the configuration integer program admitting an extended formulation of size $f(\OPT)\cdot \text{poly}(n,m)$.
In addition, we give an $O(\log n)$-approximate integral formulation of polynomial
size, even for arbitrary processing times and for the far more general
setting of {\it unrelated machines}. 
\end{abstract}

\newpage

%TODO mandatory: add short abstract of the document

\section{Introduction}\label{sec:intro}

%Linear programming is an important tool in combinatorial optimization.
%A wide range of problems can be formulated via linear programs and
%there are powerful solvers that find the optimal solutions to the
%resulting linear programs very quickly in practice. %
%\begin{comment}
%often outperform specialized algorithms with better theoretical upper
%bounds in practice 
%\end{comment}
%% \cite{}. 
%Therefore, one is often interested in representing the set
%of all feasible solutions of a problem in the form of the feasible
%region of a linear program. This is a powerful general method with
%many applications in theory and practice. %
%\begin{comment}
%even beyond solving problems directly, e.g., designing approximation
%and FPT algorithms \cite{}. 
%\end{comment}

In order to solve a linear program quickly one is interested in a formulation
with as few variables and constraints as possible. A useful technique
for this are extended formulations.
% \emph{. }
A polytope $Q$ is said
to be an {\em extended formulation} or {\em extension} of a
polytope $P$ if $P$ is a linear projection of $Q$. 
There are many examples
of polytopes $P$ that require many constraints to be described but
that admit extended formulations that are much smaller. 
For instance,
the convex hull of all characteristic vectors of spanning trees in
a graph with $n$ vertex needs $2^{\Omega(n)}$ inequalities to be
described~\cite{edmonds1971matroids}, but it admits an extended
formulation of size $O(n^{3})$~\cite{martin1991using}. The \emph{extension
complexity $\xc (P)$} of a polytope $P$ is the minimum number of inequalities needed
to describe an extended formulation of it, see \cite{anor/ConfortiCZ13,Kaibel11,VanderbeckW10,Wolsey11}
for surveys on the topic.
We study the classical scheduling problem of assigning jobs on identical
machines to minimize the makespan, also known as $P||C_{\max}$ in the scheduling literature. 
We are given a set $J$ of $n$ jobs and a set $M$ of $m$ identical machines
and every job $j\in J$ has a processing time $p_{j}\in\mathbb{N}$.
The goal is to assign each job on a machine in order to minimize the maximum load over the machines, 
where the load of a machine is the sum of the processing times of the jobs assigned to it.
%Several
%formulations have been considered for this problem in the literature, being the most prominent the {\it assignment integer program} and the {\it configuration integer program}, and we
%show that for both the integer hull has large extension complexity, even in the special
%case where $p_{j}\in\{1,2\}$ for each job $j\in J$, and the optimal
%solution has a makespan of 2. 
%In contrast, this case can be solved easily in polynomial
%time by a simple greedy routine.

% and it is the simplest combinatorial
%optimization problem we are aware of whose polytope has large extension
%complexity.

\subsection{Our Contribution}\label{sec:contribution}

A natural formulation for $P||C_{\max}$, known as the {\it assignment integer program}, uses a variable $x_{ij}$
for each combination of a machine $i$ and a job $j$, modelling whether
$j$ is assigned to $i$. 
The makespan is then modeled by an additional
variable $T$.
Then, its linear relaxation is given by
%, see Figure~\ref{fig:LP-formulation}. 
% \vspace{-.5cm}
% \begin{figure}[H]
%\begin{eqnarray*}
%\min & T\\
%\mathrm{s.t.} & \sum_{i\in M}x_{ij} & =1\,\,\,\forall j\in J\\
% %& \sum_{j\in J}x_{ij}p_{ij} & \le T\,\,\,\forall i\in M\\  COMMENT HRT: At this point we are only talking about the case where p_{ij} is the same for all machines.
% & \sum_{j\in J}x_{ij}p_{j} & \le T\,\,\,\forall i\in M\\
% & x_{ij} & \ge0\,\,\,\forall j\in J\,\forall i\in M\\
% & T & \ge0
%\end{eqnarray*}
\begin{alignat}{2}
%{[\clp(T)]}\colon
%\label{eq:clpmach}
\min \quad \quad \quad \; T &  & &\label{eq:LP1} \\ 
\mathrm{s.t.}\quad \sum_{i\in M}x_{ij}& =1 &  & \quad\text{ for all }j\in J,\label{eq:job-equality-1-1-1}\\
%\sum_{i\in M}x_{ij} & =1 &  & \quad\text{ for all }j\in J,\;\text{}\\
\sum_{j\in J}x_{ij}p_{j}  & \le T &  &\quad  \text{ for all }i\in M,\\
%\sum_{\ell:j\in J_{i\ell}}y_{ji\ell} & =x_{ij} &  & \quad\text{ for all }i\in M,\;\text{for all }j\in J,\\
%\sum_{j\inJ}p_{ij}x_{ij}
x_{ij} & \ge0 &  & \quad\text{ for all }i\in M,\;\text{for all }j\in J. \label{eq:LP4}
%T & \ge 0. &  & \quad\text{ }\;\text{ }
\end{alignat}
% \caption{\label{fig:LP-formulation} The assignment linear program for $P||C_{\max}$. }
% \end{figure}
%\vvcom{Can we shorten the description of the lower bound?}
\noindent{\it Lower bounds on the extension complexity.} We prove that there are instances with $O(n)$ jobs and machines such
that the convex hull $P_{I}$ of all integral solutions to the above
linear program has an extension complexity of $2^{\Omega(n/\log n)}$. 
The optimal
solutions form a face of $P_{I}$ and our bound also holds for this
face and hence for all
% integral 
polytopes containing it as a face. Our instances satisfy
that $p_{j}\in\{1,2\}$ for each job $j\in J$ and the optimal makespan
is 2. 
Such instances can be solved optimally by a simple greedy algorithm in time $O(n+m)$.
Our key insight is that there are faces of $P_{I}$ in which
some jobs cannot be assigned to certain machines, e.g., defined via
equalities of the form $x_{ij}=0$. Hence, an extended formulation
of $P_{I}$ also yields such a formulation for the polytope of any
instance of the \emph{restricted assignment problem }in which we have
the same input as in $P||C_{\max}$ and additionally for each job
$j$ there is a set of machines $M_{j}$ such that $j$ must be assigned
to a machine in $M_{j}$. 
Using a result in \cite{goeoes2018extension} we show that
% It follows
% easily from known results that 
there are 3-Bounded-2-SAT instances
such that the polytope describing all feasible solutions for them has extension
complexity $2^{\Omega(n/\log n)}$. 
This might be of independent interest, in particular since 2-SAT can be solved easily in polynomial time.
%\anr{Do you know a reference for this by chance?}
We reduce these instances to the
restricted assignment problem using a reduction from 3-Bounded-2-SAT
to the restricted assignment problem such that $p_{j}\in\{1,2\}$
for each job $j\in J$ and the optimal makespan is 2~\cite{ebenlendr2014graph}. 
% There is a reduction from 3-Bounded-3-SAT
Hence, the polytope of all optimal
solutions has extension complexity $2^{\Omega(n/\log n)}$. By
the above this also holds for the polytope of the corresponding instance
of $P||C_{\max}$ where we ignore the sets $M_{j}$. Moreover, we show that
this holds also for each subpolytope containing all optimal solutions.
%\awcom{rephrased this paragraph}

Then we consider the variation of the above formulation in which we
omit the variable $T$ and the second constraint. The set of integral
points is simply the set of \emph{all }schedules and the resulting
(polynomial size) LP is integral. 
Note that hence in this space there
exists an integral linear program of small complexity containing all optimal schedules.
However, we show that if we restrict this polytope to the 
convex hull
% set 
of all
\emph{optimal }schedules then resulting polytope again has extension
complexity $2^{\Omega(n/\log n)}$. We show that our bounds are almost
tight by giving an extended formulation of size $2^{O(n)}m$ for the
convex hull of all optimal solutions, also for arbitary processing
times and for the formulation in which the makespan $T$ is a variable. %with the $T$-variable. 
\\

\noindent{\it Approximate schedules: Lower bound. }For $P||C_{\max}$ there is a polynomial time $(1+\epsilon)$-approximation algorithm known
and even an \mbox{EPTAS, see e.g.,~\cite{hochbaum1987using,Jansen2016}}. Therefore,
one might wonder whether we can obtain 
a polynomial size extended
formulations that contains all optimal schedules and possibly also some $\alpha$-approximate schedules, e.g., for $\alpha=1+\epsilon$.
We show that this is not possible for any $\alpha<\frac{3}{2}$.
% , even if one additionally allows the polytope to contain $(\frac{3}{2}-\epsilon)$-approximate schedules in addition to all optimal solutions. 
Even more, if we ask for a small polytope that contains \emph{all} $\alpha$-approximate schedules for some value $\alpha$ we show that this is 
does not exist for any $\alpha\le m^{1-\epsilon}$ with $\epsilon>0$. Moreover, this is tight in the sense that the set
of all $m$-approximate schedules is simply the set of all schedules
which admits the polynomial size formulation mentioned above. \\ 

% all $\alpha$-approximatio
% 
% 
% in an approximate sense by describing the convex hull
% of all $\alpha$-approximate schedules for some value $\alpha$.
% We show that this is not possible if $\alpha\le m^{1-\epsilon}$ for
% any $\epsilon>0$. Moreover, this is tight in the sense that the set
% of all $m$-approximate schedules is simply the set of all schedules
% which admits the polynomial size formulation mentioned above. \\

\noindent{\it Approximate schedules: Upper bound.} Despite this negative result, we show that there is a polynomial
size formulation for a polytope that contains all optimal schedules
and \emph{some }$O(\log n)$-approximate schedules if the target makespan $T$ is fixed. 
Our construction
works for arbitrary job processing times and even in the more complex
setting of unrelated machines, i.e., $R||C_{\max}$, where for each
combination of a machine $i$ and a job $j$ there is a value $p_{ij}\in\mathbb{N}\cup\{\infty\}$
denoting the processing time of $j$ when it is assigned to machine
$i$. %
\begin{comment}
Note that in this case even the setting where $p_{ij}\in\{1,2,\infty\}$
is $NP$-hard {[}?{]}. 
\end{comment}
Key to our extended formulation is to construct an instance of bipartite
matching in which there is a vertex for each job which can be matched
to vertices representing slots on the machines. 
Then we prove that in the space with the makespan $T$ as a variable such a formulation cannot
exist which yields a separation between the two spaces even though they might appear very similar
at first glance.
\\

\noindent{\it Configuration integer program.} Finally, we study the \emph{configuration integer program} which is a popular approach
in scheduling, e.g.,~\cite{Svensson2012,Jansen2017}, with connections
to bin packing, see e.g.,~\cite{hoberg2017logarithmic,rothvoss2013approximating}.
There is a variable $y_{iC}$ for each combination
of a machine $i$ and a configuration $C\in\C(T)$ where $\C(T)$,
contains all sets of jobs whose total processing size does not exceed
an upper bound $T$ on the optimal makespan. While for large $T$
already the number of variables can be exponential, for constant $T$
this number is only polynomial and could potentially admit a small
extended formulation. However, we prove that already for the case
that $p_{j}=1$ for each job $j$ and $T=2$ there are instances with
$O(n)$ jobs and machines such that this linear program has extension complexity
$2^{\Omega(n)}$. 
To show this, we establish the maybe surprising
connection that there are such instances for which the corresponding polytope is an
extended formulation of the perfect matching polytope in a graph with
$n$ vertices and the latter has extension complexity $2^{\Omega(n)}$~\cite{Rothvoss2017}.
\begin{comment}
\begin{eqnarray}
\min & 0\\
\mathrm{s.t.} & \sum_{i\in M}\sum_{C\in\C(T):j\in C}y_{iC} & =1\,\,\,\forall j\in J\\
 & \sum_{C\in\C(T)}y_{iC} & =1\,\,\,\forall i\in M\\
 & y_{iC} & \in\{0,1\}\,\,\thinspace\forall i\in M\,\forall C\in\C(T)
\end{eqnarray}
\end{comment}
On the other hand, there is an abstraction of the configuration integer program
which instead of assigning a configuration $C\in\C(T)$ to each machine
$i$ only assigns a pattern that describes how many jobs of
each size are assigned to $i$ but without specifying the actual jobs.
We prove that in contrast to the configuration integer program this abstraction
admits an extended formulation of size $O(f(T)\cdot \mathrm{poly}(n,m))$
for some function $f$. 

\subsection{Related work}

\noindent{\it Lower bounds.} There are many examples known of polytopes that
do not admit small extended formulations, i.e., formulations
of polynomial size. For instance, Yanakakis~\cite{yannakakis1991expressing}
%\anr{a referee found our description of the Yanakakis bound imprecise. I do not see why TBH}
proved that for TSP there can be no such formulation that is symmetric,
i.e., stays invariant under permutation of cities. Recently, it became
an active field of research to prove such lower bounds. For instance,
Fiorini et al.~\cite{FioriniMPTW15} extended the above result
for TSP to arbitrary (possibly non-symmetric) formulations and Avis et al.~\cite{AvisT15}
showed that neither for 3-SAT, subset sum, 3D-matching, nor for MaxCut
for suspensions of cubic planar graphs,
there can be small extended formulations. Note that all these problems
are $\mathsf{NP}$-hard and hence a polynomial size extended formulation
for any of them would be very surprising. \\

\noindent{\it Easy problems with large extension complexity.} There are only few problems
in $\mathsf{P}$ for which the corresponding polytope is known to have large
extension complexity. The most famous example is probably the perfect
matching polytope for which Rothvoss showed in his celebrated result
that it has exponential extension complexity~\cite{Rothvoss2017}.
While the matching problem is in $\mathsf{P}$, the polynomial time algorithm
for it is highly complicated. 
%\anr{Removed a sentence here about our results}
% Contrasting this, in this work we prove
% that there is a scheduling problem that admits a special case that
% can be solved by a simple greedy algorithm and still the
% corresponding polytope has large extension complexity. 
Also, Rothvoss showed that there exists a family of matroids whose associated polytopes have exponential extension complexity~\cite{rothvoss2013some}. 
This contrasts with the fact that we can optimize over any matroid in polynomial time using the greedy algorithm~\cite{Schrijver03}.

\section{Extension complexity: Lower bound}

Suppose that we are given an instance $(J,M)$ of $P||C_{\max}$.
We consider the linear program defined by (\ref{eq:LP1})-(\ref{eq:LP4}) in Section~\ref{sec:contribution}.   %Figure~\ref{fig:LP-formulation}.
Denote by $P(J,M)$ the convex hull of all its integral solutions.
In the remainder of this section we prove the following theorem. 
\begin{theorem}
\label{thm:makespan-large-xc}For every $n$ there exists an instance
$(J,M)$ of $P|p_{j}\in\{1,2\}|C_{\max}$ with $O(n)$ jobs, $O(n)$
machines, and $\OPT(J,M)=2$ such that $\mathrm{xc}(P(J,M))\ge2^{\Omega(n/\log n)}$. 
\end{theorem}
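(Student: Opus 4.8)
The plan is to realize $P(J,M)$, for a carefully chosen instance $(J,M)$, as a polytope having a face that is affinely isomorphic to the convex hull of the optimal schedules of a hard instance of the \emph{restricted assignment problem}, and then to transport a known extension-complexity lower bound along that face. Throughout I use the two standard monotonicity facts: if $F$ is a face of a polytope $Q$ then $\mathrm{xc}(F)\le\mathrm{xc}(Q)$ (restrict any extension of $Q$ to the preimage of the face-defining hyperplanes), and if $\pi$ is an affine map then $\mathrm{xc}(\pi(Q))\le\mathrm{xc}(Q)$ (compose an extension of $Q$ with $\pi$).

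The first step is to convert the lower bound of \cite{goeoes2018extension} into one for $3$-Bounded-$2$-SAT. For any graph $G$, the stable set polytope $\mathrm{STAB}(G)$ equals $\mathrm{conv}\{z\in\{0,1\}^{V(G)}:z\text{ satisfies }\varphi_G\}$ for the monotone $2$-SAT formula $\varphi_G=\bigwedge_{uv\in E(G)}(\bar u\vee\bar v)$, in which each variable occurs exactly as many times as the degree of the corresponding vertex. Starting from the $N$-vertex graphs $G$ of \cite{goeoes2018extension} with $\mathrm{xc}(\mathrm{STAB}(G))\ge 2^{\Omega(N/\log N)}$ and reducing the number of occurrences of each variable by the usual copy trick — replace a variable used $d$ times by fresh copies $y_1,\dots,y_d$ and add the cyclic implications $(\bar y_1\vee y_2),\dots,(\bar y_{d-1}\vee y_d),(\bar y_d\vee y_1)$, which force all copies to coincide while leaving every variable with at most three occurrences — I obtain $3$-Bounded-$2$-SAT formulas $\varphi$ whose satisfiability polytope $S_\varphi$ is affinely isomorphic to the corresponding $\mathrm{STAB}(G)$, hence has $\mathrm{xc}(S_\varphi)\ge 2^{\Omega(N/\log N)}$, on $\Theta(N)$ variables.

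Next I feed such a $\varphi$ into the reduction of \cite{ebenlendr2014graph} from $3$-Bounded-$2$-SAT to the restricted assignment problem. It produces an instance $(J,M,(M_j)_{j\in J})$ with $p_j\in\{1,2\}$ for every job, $|J|=O(N)$ jobs, $|M|=O(N)$ machines, and $\OPT=2$, in which the makespan-$2$ schedules are in bijection with the satisfying assignments of $\varphi$ — the truth value of each variable being recorded by one fixed coordinate $x_{ij}$ of the schedule vector. Letting $R$ denote the convex hull of the (integral, restriction-respecting, makespan-$2$) optimal schedules, a coordinate projection maps $R$ onto $S_\varphi$, so $\mathrm{xc}(R)\ge 2^{\Omega(N/\log N)}$. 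Now let $(J,M)$ be the $P||C_{\max}$ instance obtained by simply discarding all the sets $M_j$. Since $J$ contains a job of size $2$, every point of $P(J,M)$ satisfies $T\ge 2$; since a makespan-$2$ schedule still exists, $\OPT(J,M)=2$; hence $P(J,M)\cap\{T=2\}$ is a proper nonempty face, and intersecting it further with the valid equalities $x_{ij}=0$ for $i\notin M_j$ yields a face $G$ of $P(J,M)$. By construction the integral points of $G$ are exactly the optimal restriction-respecting schedules of the previous instance, each padded with the constant entries $T=2$ and $x_{ij}=0$ $(i\notin M_j)$; deleting these constant coordinates is an affine isomorphism $G\to R$. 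Therefore $\mathrm{xc}(P(J,M))\ge\mathrm{xc}(G)=\mathrm{xc}(R)\ge 2^{\Omega(N/\log N)}=2^{\Omega(n/\log n)}$, where $n=|J|=\Theta(N)$. Finally, since $G$ is a face of $P(J,M)$ contained in the face of optimal solutions, the same bound holds for that face and for every polytope sandwiched between it and $P(J,M)$.

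The conceptual heart of the argument is the observation that forbidding assignments via equalities $x_{ij}=0$ exposes a restricted assignment polytope as a face of $P(J,M)$; granted this, the last step is routine bookkeeping with faces and projections. The point that truly needs care is the first step: the passage to bounded-occurrence $2$-SAT must increase the number of variables by at most a constant factor — any sub-polylogarithmic blow-up would still suffice — since otherwise the $2^{\Omega(N/\log N)}$ rate degrades; because the copy trick produces $\Theta(|E(G)|)$ variables, one needs the hard graphs of \cite{goeoes2018extension} to be sparse, equivalently to be realizable with bounded degree. One should also verify that the reduction of \cite{ebenlendr2014graph} is ``extension-complexity faithful'', i.e.\ that it matches satisfying assignments with \emph{optimal} schedules rather than merely satisfiability with the existence of a makespan-$2$ schedule, as this is precisely what licenses the projection $R\to S_\varphi$.
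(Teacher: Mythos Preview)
Your proposal is correct and follows essentially the same route as the paper: bounded-degree independent-set lower bound $\to$ 2-SAT $\to$ 3-bounded 2-SAT via the copy trick $\to$ restricted assignment via Ebenlendr--Kr\v{c}\'al--Sgall $\to$ face of $P(J,M)$ cut out by $T=2$ and $x_{ij}=0$ for $i\notin M_j$. One minor over-claim: the makespan-$2$ schedules are not in \emph{bijection} with satisfying assignments (several schedules can map to the same assignment, e.g.\ when both literals of a clause are satisfied or when dummy jobs are placed), but since your argument only uses the surjective coordinate projection $R\to S_\varphi$ this is harmless---and the paper in fact spells out precisely that projection rather than leaving its verification as a caveat.
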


\begin{comment}
Our high level approach is as follows. It follows easily from known
results that there are 2-SAT formulae $\Phi$ with $O(n)$ variables
and clauses such that the extension complexity the polytope of all
satisfying assignments is at least $2^{\Omega(n/\log n)}$ {[}?{]}.
We transform $\Phi$ to an equivalent 3-Bounded-2-SAT formula $\Phi'$,
i.e., in $\Phi'$ each variable appears at most three times and the
solutions to $\Phi$ are in one-to-one correspendence to the solutions
to $\Phi'$. For $\Phi'$ we invoke a reduction in {[}?{]} from 3-Bounded-2-SAT
to the restricted assignment problem, i.e., the problem of scheduling
a set of jobs $J$ on a set of machines $M$ to minimize the makespan
such that for each job $j\in J$ there is a set of machines $M_{j}\subseteq M$
and we are allowed to assign $j$ only to the machines in $M_{j}$.
For the resulting instance let $J$ and $M$ denote the set of jobs
and machines, respectively. We consider the polyhedron $P(J,M)$ which
in a sense pretends that we are allowed to assign each job to \emph{any
}machine. Then we argue that there is a face $P'(J,M)$ of $P(J,M)$
in which each job $j$ is assigned to a machine in $M_{j}$ and in
which the makespan is 2. The extension complexity of $P(J,M)$ is
at least as large as the extension complexity of $P'(J,M)$. Finally,
we show that there is a projection of $P'(J,M)$ to the set of all
satisfying assignments of $\Phi$, implying that $P'(J,M)$ and thus
$P(J,M)$ has large extension complexity.
\end{comment}

Let $n\in\mathbb{N}$. For any given SAT formula $\Phi$ with $n$
variables we define the polytope $\mathrm{SAT}(\Phi)$ as the convex
hull of all satisfying assignments, i.e., $\mathrm{SAT}(\Phi):=\mathrm{conv}(\{y\in\{0,1\}^{n}:\Phi(y)=1\})$.
We use the following theorem that follows easily from~\cite{goeoes2018extension}. 
\begin{theorem}
\label{thm:formula-large-xc}For every $n\in\mathbb{N}$ there exists
a 2-SAT formula $\Phi$ with $O(n)$ variables and $O(n)$ clauses
such that $\mathrm{xc}(\SAT(\Phi))\ge2^{\Omega(n/\log n)}$. Each
clause of $\Phi$ contains exactly two literals.
\end{theorem}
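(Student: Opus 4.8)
The plan is to reinterpret the stable‑set polytope lower bound of Göös, Jain and Watson~\cite{goeoes2018extension} in the language of 2‑SAT. Their result supplies, for every $n$, a graph $G=(V,E)$ with $|V|=n$ whose stable‑set polytope $\mathrm{STAB}(G)=\mathrm{conv}\{z\in\{0,1\}^{V}: z_{u}+z_{v}\le 1 \text{ for every }\{u,v\}\in E\}$ satisfies $\xc(\mathrm{STAB}(G))\ge 2^{\Omega(n/\log n)}$, and the graph can be taken of bounded degree, so that $|E|=O(n)$. Given such a $G$, I would introduce one Boolean variable $x_{v}$ per vertex $v\in V$ and, for every edge $\{u,v\}\in E$, the clause $(\bar x_{u}\vee\bar x_{v})$, obtaining a 2‑SAT formula $\Phi=\Phi_{G}$ in which every clause has exactly two literals, with $n$ variables and $|E|=O(n)$ clauses.

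The heart of the argument is the identity $\SAT(\Phi_{G})=\mathrm{STAB}(G)$. For $y\in\{0,1\}^{V}$ the clause $(\bar x_{u}\vee\bar x_{v})$ is falsified exactly when $y_{u}=y_{v}=1$, so $\Phi_{G}(y)=1$ iff no edge of $G$ has both endpoints set to $1$, iff the support $\{v:y_{v}=1\}$ is an independent set, iff $y$ is one of the $0/1$ vectors generating $\mathrm{STAB}(G)$. Since $\SAT(\Phi_{G})$ and $\mathrm{STAB}(G)$ are the convex hulls of the very same finite set of $0/1$ points, they are the same polytope, and therefore $\xc(\SAT(\Phi_{G}))=\xc(\mathrm{STAB}(G))\ge 2^{\Omega(n/\log n)}$, which is the assertion of the theorem.

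The only step requiring genuine care is the bound $|E|=O(n)$ on the number of clauses; everything else is the dictionary above. If the graph produced by~\cite{goeoes2018extension} is not already of bounded degree, the standard fix is to replace each variable $x_{v}$ that occurs in $d_{v}\ge 2$ clauses by $d_{v}$ fresh copies $x_{v}^{(1)},\dots,x_{v}^{(d_{v})}$, one per occurrence, and add the cyclic implications $(\bar x_{v}^{(i)}\vee x_{v}^{(i+1)})$ (indices mod $d_{v}$), which force all copies equal; projecting onto one representative coordinate per vertex recovers $\SAT(\Phi_{G})$ as a linear image, so the extension complexity does not drop, and in the new formula each variable occurs at most three times, hence its number of clauses is linear in its number of variables. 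However, one must then check that this reparametrisation does not weaken the lower‑bound exponent, which is exactly why having the underlying graph sparse from the outset (bounded degree, equivalently $O(n)$ edges) is the clean route — and this sparsity is moreover what is needed later to feed the formula into the reduction from 3‑Bounded‑2‑SAT to restricted assignment. I therefore expect the single nontrivial point to be confirming (or arranging) the sparsity of the Göös–Jain–Watson instance.
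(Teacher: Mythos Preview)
Your proposal is correct and follows essentially the same approach as the paper: both invoke the bounded-degree independent-set polytope lower bound of G\"o\"os, Jain and Watson, translate each edge $\{u,v\}$ into the clause $(\neg x_u \vee \neg x_v)$, and observe that $\SAT(\Phi_G)$ coincides with the independent-set polytope, so the extension-complexity lower bound transfers directly. Your additional discussion of the variable-copying fallback is not needed here (the paper simply asserts the bounded-degree property of the graph from~\cite{goeoes2018extension}), but it does no harm and correctly identifies the one point requiring external input.
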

\begin{proof}[Proof of Theorem~\ref{thm:formula-large-xc}]
% [Proof of Theorem \ref{thm:formula-large-xc}]
Let $n\in\mathbb{N}$. In~\cite{goeoes2018extension}
it is shown that there exists a graph $G=(V,E)$ with $n$ vertices
such that for its independent set polytope $P_{G}$ it holds that
$\xc(P_{G})\ge2^{\Omega(n/\log n)}$, i.e., $P_{G}\subseteq[0,1]^{n}$
is the convex hull of all incidence vectors of independent sets of
$G$. Moreover, the degree of $G$ is bounded by a global constant
that is independent of $n$. Based on $G$ we construct a 2-SAT formula
$\Phi$. For each node $v\in V$ we introduce a variable $x_{v}$,
the intuition being that $x_{v}$ is true if $v$ is in the independent
set. For each edge $\{u,v\}\in E$ we introduce a clause $(\neg x_{v}\vee\neg x_{u})$,
modelling that not both $u$ and $v$ can be in the independent set.
The number of variables is $n$ and since $G$ has bounded degree
each variable appears in at most $O(1)$ clauses. Hence, the number
of clauses is also $O(n)$. Each satisfying assignment of $\Phi$
corresponds to an independent set of $G$ and vice versa. Therefore,
$\mathrm{xc}(\SAT(\Phi))\ge2^{\Omega(n/\log n)}$.
\end{proof}

Let $\Phi$ denote the formula due to Theorem~\ref{thm:formula-large-xc}.
We transform $\Phi$ into an equivalent 3-Bounded-2-SAT formula $\Phi'$
using a standard reduction
%\awcom{I believe that it is in the Garey and Johnson but I do not have the book. Does anyone of you have it?},
i.e., $\Phi'$ is a 2-SAT formula in which each variable appears at
most three times. Let $x_{i}$ be a variable in $\Phi$ and assume
that $x_{i}$ occurs $k$ times. We introduce $k$ new variables $x_{i}^{(1)},...,x_{i}^{(k)}$
and for each $\ell\in[k]$ we replace the $\ell$-th occurrence of $x_{i}$
with $x_{i}^{(\ell)}$. Additionally, we add the clauses $\Big(\neg x_{i}^{(\ell)}\vee x_{i}^{(\ell+1)}\Big)$
for each $\ell\in\{1,...,k-1\}$
%\footnote{Being equivalent to $x_{i}^{(\ell)}\Rightarrow x_{i}^{(\ell+1)}$ for all $\ell\in\{1,...,k-1\}$, and $x_{i}^{(k)}\Rightarrow x_{i}^{(1)}$.}
and the clause $\Big(\neg x_{i}^{(k)}\vee x_{i}^{(1)}\Big)$. Hence, in any satisfying
assignment of the resulting formula, either $x_{i}^{(\ell)}=1$ for
each $\ell\in[k]$ or $x_{i}^{(\ell)}=0$ for each $\ell\in[k]$. We do
this transformation with each variable in $\Phi$. Let $\Phi'$ denote
the resulting formula. By construction, we have that $\Phi'$ has
$O(n)$ variables and $O(n)$ clauses, using that both quantities
are linear in the number of literals in $\Phi$ and the latter is
bounded by $O(n)$. By construction, each variable appears exactly
three times and at most two times positively and at most two times
negatively. Also, each clause contains exactly two literals.

\subsection{Reduction to the restricted assignment problem}

\noindent{\it Construction.} Next, based on $\Phi'$ we construct an instance of the restricted
assignment problem. We invoke the reduction from~\cite{ebenlendr2014graph}.
For each variable $x$ in $\Phi'$ we introduce a machine $i(x)$,
a machine $i(\neg x)$, and a job $j(x)$. The job $j(x)$ has processing
time $p_{j(x)}=2$ and it can be assigned only on $i(x)$ and $i(\neg x)$,
i.e., $M_{j(x)}=\{i(x),i(\neg x)\}$. 
The intuition behind is that if $j(x)$
is scheduled on $i(x)$ then $x$ is true and if $j(x)$ is scheduled
on $i(\neg x)$ then $x$ is false. For each clause $c$ we introduce
one machine $i(c)$. For each variable $x$ that occurs in $c$ we
introduce a job $j(c,x)$ with $p_{j(c,x)}=1$. If $x$ occurs positively
in $c$ then we define $M_{j(c,x)}=\{i(c),i(\neg x)\}$, otherwise
we define $M_{j(c,x)}=\{i(c),i(x)\}$. Finally, for each clause $c$
we introduce a job $j(c)$ with $p_{j(c)}=1$ and $M_{j(c)}=\{i(c)\}$.

Since the total number of variables and clauses is $O(n)$, we introduced
$O(n)$ jobs and machines. 
%We want that in solutions with
%makespan 2 each machine has a load of exactly 2. 
Let $J'$ denote
the set of jobs and let $\bar{m}$ denote the number of machines
defined so far. 
We want that in solutions with makespan 2 each machine has a load of exactly 2. 
To this end,
we introduce a set $D$ of $2\bar{m}-\sum_{j\in J'}p_{j}$ dummy jobs of length
1 each. 
For each dummy job $j\in D$, it can be assigned to any machine, i.e.,
$M_{j}=M$.\\

\noindent{\it Correctness.} For each satisfying assignment of $\Phi'$ there is a schedule of
makespan~2: If a variable $x$ is assigned to be $x=1$ in the satisfying
assignment then we schedule $j(x)$ on machine $i(x)$, otherwise
we schedule $j(x)$ on machine $i(\neg x)$. Consider a clause $c$.
There must be at least one variable that satisfies $c$. For each
such variable $x$, if $x$ occurs positively in $c$ then we assign
$j(c,x)$ on $i(\neg x)$, otherwise we assign $j(c,x)$ on $i(x)$.
For each variable $y$ that does not satisfy $c$ we assign $j(c,y)$
to $i(c)$. Also, we assign $j(c)$ to $i(c)$. Using that each variable
appears at most twice positively and at most twice negatively, one
can check that the resulting makespan is 2. Finally, we assign the
dummy jobs to the machines such that each machine still has a makespan
of at most 2. 
This is the optimal solution since the largest processing time is 2.
One can also easily show that if there is a solution of makespan 2
then there exists a satisfying assignment for $\Phi'$, see~\cite{ebenlendr2014graph}
for details.

% \subsection{Faces of scheduling polytope}
\paragraph*{Faces of scheduling polytope.}

Let $J$ and $M$ denote the set of jobs and machines in the construction
above, respectively. Also, let $M_{j}$ denote the set of allowed
machines for each job $j\in J$. We consider the polyhedron $P(J,M)$.
Note that $P(J,M)$ ignores the sets $M_{j}$ of allowed machines
for each job $j$. We argue that there is a face $P'(J,M)$ of $P(J,M)$
such that each vertex of $P'(J,M)$ corresponds to a schedule in which
each job $j\in J$ is assigned on a machine in $M_{j}$ and the makespan
is 2. Observe that the inequalities $\sum_{j\in J}\sum_{i\in M\setminus M_{j}}x_{ij}\ge0$
and $T\ge2$ are valid inequalities for $P(J,M)$. Hence, the set
\[P'(J,M)=P(J,M)\cap\left\{(x,T):\sum_{j\in J}\sum_{i\in M\setminus M_{j}}x_{ij}=0\text{ and }T=2\right\}\]
is a face of $P(J,M)$. Also, $\mathrm{xc}(P'(J,M))\le\mathrm{xc}(P(J,M))$.

% \subsection{Projection to SAT-polytope}
%\paragraph*{Projection to SAT-polytope.}
\begin{proof}[Proof of Theorem~\ref{thm:makespan-large-xc}]
We describe now a linear projection $f:P'(J,M)\rightarrow\mathrm{SAT}(\Phi')$
of $P'(J,M)$ to $\SAT(\Phi')$. Given a point $x\in P'(J,M)$ we
define that the component of $f(x)$ corresponding to the variable
$y_{\ell}$ equals to $x_{i(y_{\ell}),j(y_{\ell})}$. By the construction
above and the proof of correctness of the reduction, for each integral
point $y\in\mathrm{SAT}(\Phi')$ there exists a feasible schedule
for $(J,M)$ with makespan 2 in which each job $j$ is assigned to
a machine in $M_{j}$. Additionally, for each variable $y$ in $\Phi'$
it holds that if $y=1$ then $j(y)$ is assigned on machine $i(y)$ in this schedule.
Thus, there exists an integral point $x\in P'(J,M)$ such that $f(x)=y$.
Similarly, for each integral point $x\in P'(J,M)$ we have that $f(x)\in\mathrm{SAT}(\Phi')$.
Thus, $f(P'(J,M))=\mathrm{SAT}(\Phi')$ and therefore, $\xc(\SAT(\Phi'))\le\mathrm{xc}(P'(J,M))$.
Finally we give a linear projection $g:\SAT(\Phi')\rightarrow\SAT(\Phi)$.
Recall that for each variable $x_{i}$ in $\Phi$ we introduced a
set of new variables $x_{i}^{(1)},...,x_{i}^{(k)}$ in $\Phi'$. By
construction, in each satisfying assignment for $\Phi'$ all these
variables $x_{i}^{(1)},...,x_{i}^{(k)}$ have the same value. Therefore,
we define that the component of $g(x)$ corresponding to $x_{i}$
equals $x_{i}^{(1)}$ for each variable $x_{i}$ in $\Phi$. We obtain
$g(\mathrm{SAT}(\Phi'))=\mathrm{SAT}(\Phi)$. Therefore, $\xc(\SAT(\Phi))\le\xc(\SAT(\Phi'))$.
Hence, $\mathrm{xc}(P(J,M))\ge\mathrm{xc}(P'(J,M))\ge\xc(\SAT(\Phi'))\ge\xc(\SAT(\Phi))\ge2^{\Omega(n/\log n)}$.
This completes the proof of Theorem~\ref{thm:makespan-large-xc}.
\end{proof}
Since already the face $P'(J,M)$ of $P(J,M)$ containing the optimal solutions
to $(J,M)$ has extension complexity $2^{\Omega(n/\log n)}$
we obtain the following corollary.%
\begin{comment}

\subsection{Omitting some non-optimal solutions.}

The polyhedron $P(J,M)$ above contains \emph{all} optimal solutions
to $(J,M)$ and also \emph{all} non-optimal solutions and we proved
that it has high extension complexity. This raises the question whether
one can construct an integral polyhedron $\bar{P}(J,M)$ with $\bar{P}(J,M)\subseteq P(J,M)$
with small extension complexity such that $\bar{P}(J,M)$ contains
all optimal solutions to $(J,M)$ and potentially \emph{some }non-optimal
solutions. Our construction above rules this out since already the
face of all optimal solutions has extension complexity $2^{\Omega(n/\log n)}$. 
\end{comment}
\begin{comment}
Assume that we had such a polyhedron $\bar{P}(J,M)$. We define $\bar{P}'(J,M):=\bar{P}(J,M)\cap\{(x,T)|\sum_{j\in J}\sum_{i\in M\setminus M_{j}}x_{ij}=0\wedge T=2\}$
which is a face of $\bar{P}(J,M)$. Then $\bar{P}'(J,M)$ contains
exactly the optimal solutions to $(J,M)$ and the above maps $f,g$
yield a linear projection to $\mathrm{SAT}(\Phi)$. Hence, the extension
complexity of $\bar{P}(J,M)$ is at least $2^{\Omega(n/\log n)}$.
\end{comment}

\begin{corollary}
\label{cor:only-optimal-points}For every $n$ there exists an instance
$(J,M)$ of $P|p_{j}\in\{1,2\}|C_{\max}$ with $O(n)$ jobs, $O(n)$
machines, and $\OPT(J,M)=2$ such that for any integral polyhedron
$\bar{P}(J,M)\subseteq P(J,M)$ that contains all optimal and possibly
also some other solutions to $(J,M)$ it holds that $\mathrm{xc}(\bar{P}(J,M))\ge2^{\Omega(n/\log n)}$. 
\end{corollary}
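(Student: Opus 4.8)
The plan is to reuse the face-extraction argument that already appeared in the proof of Theorem~\ref{thm:makespan-large-xc}, but now starting from an arbitrary integral polyhedron $\bar{P}(J,M)\subseteq P(J,M)$ that contains all optimal solutions to $(J,M)$. Fix the same instance $(J,M)$ coming from the reduction of Theorem~\ref{thm:makespan-large-xc}, with its sets $M_j$ of allowed machines. The key observation is that the two inequalities $\sum_{j\in J}\sum_{i\in M\setminus M_j}x_{ij}\ge 0$ and $T\ge 2$ remain valid for $\bar{P}(J,M)$: the first is valid because $\bar{P}(J,M)$ is contained in the nonnegative orthant (it is a subpolytope of $P(J,M)$), and the second is valid because every point of $\bar{P}(J,M)$ is a convex combination of integral schedules of $(J,M)$, each of which has makespan at least $\OPT(J,M)=2$. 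Hence
\[
\bar{P}'(J,M)=\bar{P}(J,M)\cap\left\{(x,T):\sum_{j\in J}\sum_{i\in M\setminus M_j}x_{ij}=0\text{ and }T=2\right\}
\]
is a face of $\bar{P}(J,M)$, so $\xc(\bar{P}'(J,M))\le\xc(\bar{P}(J,M))$.

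Next I would identify the vertices of $\bar{P}'(J,M)$. Since $\bar{P}(J,M)$ is integral, its vertices are integral schedules, and the vertices of the face $\bar{P}'(J,M)$ are exactly those vertices of $\bar{P}(J,M)$ satisfying the two extra equalities, i.e.\ integral schedules with makespan exactly $2$ in which no job is placed on a forbidden machine. By the correctness of the reduction from~\cite{ebenlendr2014graph}, these are precisely the optimal solutions to the restricted-assignment instance, and by hypothesis all of them belong to $\bar{P}(J,M)$; hence $\bar{P}'(J,M)$ is exactly the convex hull of all optimal solutions to $(J,M)$ respecting the sets $M_j$ — the same polytope (call it $P'(J,M)$) that was analysed in the proof of Theorem~\ref{thm:makespan-large-xc}.

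Finally I would invoke the two linear projections already constructed there: the map $f$ projecting $P'(J,M)$ onto $\SAT(\Phi')$ via $f(x)_{y_\ell}=x_{i(y_\ell),j(y_\ell)}$, and the map $g$ projecting $\SAT(\Phi')$ onto $\SAT(\Phi)$ by collapsing each block $x_i^{(1)},\dots,x_i^{(k)}$ to $x_i^{(1)}$. These give $\xc(\SAT(\Phi))\le\xc(\SAT(\Phi'))\le\xc(P'(J,M))=\xc(\bar{P}'(J,M))\le\xc(\bar{P}(J,M))$, and Theorem~\ref{thm:formula-large-xc} bounds the left-hand side below by $2^{\Omega(n/\log n)}$.

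The only real subtlety — and the step I would check most carefully — is the claim that $\bar{P}'(J,M)$ equals the convex hull of the optimal solutions and not something strictly larger. This needs that (i) $\bar{P}(J,M)$ is integral so that the face $\bar{P}'(J,M)$ has only integral vertices, (ii) an integral point of $P(J,M)$ lying on both hyperplanes is forced to be a genuine schedule of makespan $2$ with $x_{ij}=0$ for $i\notin M_j$ (immediate from the equalities and the fact that $x$ is $\{0,1\}$-valued with $\sum_i x_{ij}=1$), and (iii) by hypothesis every such schedule is actually a point of $\bar{P}(J,M)$, so no optimal schedule is lost. Everything else is a verbatim repetition of the argument in Theorem~\ref{thm:makespan-large-xc}, so the corollary follows.
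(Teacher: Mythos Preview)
Your proof is correct and follows exactly the approach the paper intends: the paper derives the corollary in one sentence from the fact that the face $P'(J,M)$ already has extension complexity $2^{\Omega(n/\log n)}$, and your argument spells out precisely why $\bar{P}'(J,M)=P'(J,M)$ for any integral $\bar{P}(J,M)\subseteq P(J,M)$ containing all optimal solutions. Your added care in verifying that the two defining inequalities remain valid for $\bar{P}(J,M)$ and that no optimal restricted-assignment schedule is lost is exactly the content of the (terse) justification in the paper.
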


\subsection{Extensions}

Above we proved that $P(J,M)$ has large extension complexity. The
polyhedron $P(J,M)$ is defined using the variable $T$ which represents
an upper bound on the makespan of the respective solution. This raises
the question whether there exist compact extended formulations in
the space defined only via the variables $x$, without the variable
$T$. 
Formally, we consider the polyhedron 
\[Q(J,M)=\mathrm{conv}\left(\left\{ x\in \{0,1\}^{M\times J}:\sum_{i\in M}x_{ij}=1\,\text{ for all } j\in J\right\} \right)\]
describing the set of all schedules, including all optimal ones. %
\begin{comment}
\begin{eqnarray*}
\min & 0\\
\mathrm{s.t.} & \sum_{i\in M}x_{ij} & =1\,\,\,\forall j\in J\\
 & x_{ij} & \in\{0,1\}\,\,\,\forall j\in J\,\forall i\in M
\end{eqnarray*}
\end{comment}
One can easily show that its extension complexity is $O(nm)$ by simply
taking its linear relaxation. Observe that Corollary~\ref{cor:only-optimal-points}
rules out such a polyhedron in the space lifted with the variable $T$.
\begin{proposition}
\label{lem:comlexity-Q(J,M)}The polyhedron $Q(J,M)$ has extension
complexity $O(nm)$. 
\end{proposition}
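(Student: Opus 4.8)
The plan is to show that the natural linear relaxation of the system defining $Q(J,M)$ is already integral, so that $Q(J,M)$ is literally cut out by $O(nm)$ inequalities and hence serves as an extended formulation of itself. Concretely, I would introduce the polytope
\[R(J,M) := \left\{x\in\RR^{M\times J} \;:\; \sum_{i\in M}x_{ij}=1 \text{ for all } j\in J,\ x_{ij}\ge 0 \text{ for all } i\in M,\, j\in J\right\},\]
which is described by $n$ equalities and $nm$ inequalities, so trivially $\xc(R(J,M))\le nm=O(nm)$. It then remains to prove the identity $R(J,M)=Q(J,M)$.

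For this I would observe that $R(J,M)$ is the Cartesian product $\prod_{j\in J}\Delta_j$, where $\Delta_j=\{z\in\RR^{M}: \sum_{i\in M}z_i=1,\ z_i\ge 0\}$ is the standard simplex over the machines, whose vertices are exactly the unit vectors $e_i$ for $i\in M$. Since the vertex set of a Cartesian product of polytopes is the product of their vertex sets, the vertices of $R(J,M)$ are precisely the points $x$ with $x_{ij}\in\{0,1\}$ and exactly one machine $i$ having $x_{ij}=1$ for each job $j$ — that is, exactly the integral assignments $x\in\{0,1\}^{M\times J}$ with $\sum_{i\in M}x_{ij}=1$ for all $j$. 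Hence $R(J,M)=\mathrm{conv}(\{x\in\{0,1\}^{M\times J}:\sum_{i\in M}x_{ij}=1\ \forall j\})=Q(J,M)$, and combining with the first paragraph gives $\xc(Q(J,M))=\xc(R(J,M))\le nm=O(nm)$.

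The only real content — and it is mild — is the integrality claim that $R(J,M)$ has no fractional vertices; the product-of-simplices argument above handles it directly, and as an alternative one could note that the constraint matrix is an interval (network) matrix and therefore totally unimodular, so that integrality of the right-hand side forces integral vertices. I do not anticipate any genuine obstacle here, which is the point of the proposition: in the $x$-space without the makespan variable $T$, all schedules — including the optimal ones — are captured by a polynomial-size integral linear program, in sharp contrast with Corollary~\ref{cor:only-optimal-points} in the lifted space.
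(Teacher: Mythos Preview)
Your proposal is correct and follows essentially the same approach as the paper: both argue that the linear relaxation of $Q(J,M)$ is already integral, so $Q(J,M)$ is described by its $O(nm)$ defining inequalities. The only cosmetic difference is in how integrality is verified---the paper writes any fractional point as the midpoint of two perturbed feasible points, whereas you invoke the product-of-simplices structure; both arguments are elementary and equivalent in strength.
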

%
%\begin{proof}
%[Proof of Proposition \ref{lem:comlexity-Q(J,M)}]
%Consider the LP-relaxation
%of the formulation above, i.e.,
%%
%%\begin{eqnarray}
%%\min & 0\nonumber \\
%%\mathrm{s.t.} & \sum_{i\in M}x_{ij} & =1\,\,\,\forall j\in J\label{eq:job-equality-1-1-1}\\
%% & x_{ij} & \ge0\,\,\,\forall j\in J\,\forall i\in M\nonumber 
%%\end{eqnarray}
%\begin{alignat}{2}
%%{[\clp(T)]}\colon
%%\label{eq:clpmach}
%%\min \quad \quad \quad \; T &  & &\\
%\sum_{i\in M}x_{ij}& =1 &  & \quad\text{ for all }j\in J,\label{eq:job-equality-1-1-1}\\
%%\sum_{i\in M}x_{ij} & =1 &  & \quad\text{ for all }j\in J,\;\text{}\\
%%\sum_{j\in J}x_{ij}p_{j}  & \le T &  &\quad  \text{ for all }i\in M\;\text{}\\
%%\sum_{\ell:j\in J_{i\ell}}y_{ji\ell} & =x_{ij} &  & \quad\text{ for all }i\in M,\;\text{for all }j\in J,\\
%%\sum_{j\inJ}p_{ij}x_{ij}
%x_{ij} & \ge0 &  & \quad\text{ for all }i\in M,\;\text{for all }j\in J.
%%T & \ge 0. &  & \quad\text{ }\;\text{ }
%\end{alignat}
This follows by observing that the extreme points of the linear relaxation are integral. 
%For each job $j\in J$, the set $\{x\in \}$
Let $x$ be a feasible point in the relaxation such that $x_{ij}\in(0,1)$ for some machine $i\in M$ and some job $j\in J$. 
Due
to Equality~(\ref{eq:job-equality-1-1-1}) there must be another
machine $\ell\ne i$ such that $x_{\ell j}\in(0,1)$. For sufficiently
small $\epsilon>0$ we define a new solution $\tilde x =x+\varepsilon (\mathds{1}_{ij}-\mathds{1}_{\ell j})$,
where $\mathds{1}_{a}$ is the canonical vector with value $1$ at entry $a$ and zero otherwise.
%$\tilde x _{ij}=x_{ij}+\epsilon$,
%$\tilde x _{\ell j}=x_{\ell j}-\epsilon$, and $\tilde x _{kj}=x_{kj}$
%for all machines $k\notin \{\ell,i\}$. 
%Also, define
%$\tilde x _{\hat{i}\hat{j}}=x_{\hat{i}\hat{j}}$ for all jobs $\hat{j}\ne j$
%and all machines $\hat{i}$. 
Similarly, we define a new solution $\hat x=x+\varepsilon (\mathds{1}_{\ell j}-\mathds{1}_{i j})$.
%by $\hat x_{ij}=x_{ij}-\epsilon$, $\hat x_{\ell j}=x_{\ell j}+\epsilon$, and
%$\hat x_{\hat{i}j}=x{}_{\hat{i}j}$ for all machines $\hat{i}$ with
%$i\ne\hat{i}\ne \ell $. Also, define $\hat x_{\hat{i}\hat{j}}=x_{\hat{i}\hat{j}}$
%for all jobs $\hat{j}\ne j$ and all machines $\hat{i}$. 
We can find $\varepsilon>0$ such that $\tilde x,\hat x \ge0$ and
$x=\frac{1}{2}(\tilde x +\hat x)$, and hence $x$ is not an extreme point.
Therefore,
the extension complexity of $Q(J,M)$ is $O(nm)$.\\
%\end{proof}
%

However, when we minimize the makespan, we are interested only in
the set of all integral points in $Q(J,M)$ that correspond to optimal
solutions. Note that their convex hull 
corresponds
to the face of $P(J,M)$ containing all points $(x,T)$ with $T=\OPT$.
Due to Corollary~\ref{cor:only-optimal-points} this polytope has
extension complexity~$2^{\Omega(n/\log n)}$. We can strengthen this
statement. In the instance $(J,M)$ constructed above the optimal
makespan is 2 and any non-optimal solution has a makespan of at least
3. Thus, all integral polytopes $\bar{Q}(J,M)\subseteq Q(J,M)$ that
contain all optimal solutions to $(J,M)$ and possibly some $(3/2-\epsilon)$-approximate
solutions have large extension complexity. Note that this contrasts
the fact that $P||C_{\max}$ admits an EPTAS~\cite{Jansen2016}.
\begin{corollary}
\label{cor:large-xc-apx}Let $\epsilon>0$ with $\epsilon\le1/2$.
For every $n$ there exists an instance $(J,M)$ of $P|p_{j}\in\{1,2\}|C_{\max}$
with $O(n)$ jobs, $O(n)$ machines, and $\OPT(J,M)=2$ such that for
any integral polytope $\bar{Q}(J,M)\subseteq Q(J,M)$ whose vertices
consist of all optimal solutions and possibly some $(\frac{3}{2}-\epsilon)$-approximate
solutions to $(J,M)$ it holds that $\mathrm{xc}(\bar{Q}(J,M))\ge2^{\Omega(n/\log n)}$. 
\end{corollary}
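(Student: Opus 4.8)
The plan is to reuse, essentially verbatim, the instance $(J,M)$ built from $\Phi'$ in the proof of Theorem~\ref{thm:makespan-large-xc}; recall that every job has $p_j\in\{1,2\}$ and $\OPT(J,M)=2$. The one genuinely new ingredient is an integrality observation: since all processing times are integers, every schedule has an integer makespan, so a $(\tfrac{3}{2}-\epsilon)$-approximate schedule has makespan at most $(\tfrac{3}{2}-\epsilon)\cdot 2=3-2\epsilon$, which is strictly less than $3$ whenever $\epsilon>0$. Hence such a schedule has makespan at most $2$, and since $\OPT(J,M)=2$ it is in fact optimal. In other words, in this instance the set of $(\tfrac{3}{2}-\epsilon)$-approximate schedules coincides with the set of optimal schedules.

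Next I would use this to pin down any admissible $\bar{Q}(J,M)$. By hypothesis its vertex set contains all optimal schedules and, beyond those, only $(\tfrac{3}{2}-\epsilon)$-approximate schedules; by the previous paragraph the latter are already optimal, so the vertex set of $\bar{Q}(J,M)$ is \emph{exactly} the set of optimal schedules, and therefore $\bar{Q}(J,M)=\mathrm{conv}\big(\{x\in\{0,1\}^{M\times J}: x \text{ is an optimal schedule for }(J,M)\}\big)$ is uniquely determined. Since $\sum_{j\in J}\sum_{i\in M\setminus M_j}x_{ij}\ge 0$ is a valid inequality for $\bar{Q}(J,M)$, the set $\bar{Q}'(J,M):=\bar{Q}(J,M)\cap\{x:\sum_{j\in J}\sum_{i\in M\setminus M_j}x_{ij}=0\}$ is a face of $\bar{Q}(J,M)$, and its vertices are precisely the optimal schedules that additionally respect every allowed-machine set $M_j$, i.e., the optimal solutions of the restricted-assignment instance produced by the reduction of~\cite{ebenlendr2014graph}.

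Finally I would replay the projection argument from the proof of Theorem~\ref{thm:makespan-large-xc} on this face. By correctness of that reduction, the vertices of $\bar{Q}'(J,M)$ are exactly the restricted-assignment optima, and the linear map $f$ defined there sends $\bar{Q}'(J,M)$ onto $\SAT(\Phi')$ (every satisfying assignment of $\Phi'$ arises from such an optimum, and conversely every such optimum projects to a satisfying assignment); composing with the linear map $g$ gives $g(\SAT(\Phi'))=\SAT(\Phi)$. Chaining the resulting inequalities yields $\mathrm{xc}(\bar{Q}(J,M))\ge\mathrm{xc}(\bar{Q}'(J,M))\ge\mathrm{xc}(\SAT(\Phi'))\ge\mathrm{xc}(\SAT(\Phi))\ge 2^{\Omega(n/\log n)}$ by Theorem~\ref{thm:formula-large-xc}. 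I do not expect a real obstacle: the argument is essentially Corollary~\ref{cor:only-optimal-points} transported from the $(x,T)$-space to the $x$-space, and the only point requiring care is that the clause ``possibly some $(\tfrac{3}{2}-\epsilon)$-approximate solutions'' cannot enlarge the vertex set beyond the optimal schedules — which is exactly where the strict bound $\alpha<\tfrac{3}{2}$, together with integrality of the makespan, is used.
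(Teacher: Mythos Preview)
Your proposal is correct and matches the paper's approach. The paper's justification is exactly the paragraph preceding the corollary: since $\OPT=2$ and every schedule has integer makespan, any $(\tfrac{3}{2}-\epsilon)$-approximate schedule is optimal, so $\bar{Q}(J,M)$ is forced to be the convex hull of the optimal schedules; the paper then appeals to the face $\{T=\OPT\}$ of $P(J,M)$ and Corollary~\ref{cor:only-optimal-points}, whereas you replay the face-and-projection argument directly in the $x$-space --- a cosmetic difference, since the face $\{T=2\}$ of $P(J,M)$ is affinely isomorphic to your $\bar{Q}(J,M)$ and your $\bar{Q}'(J,M)$ is the image of $P'(J,M)$ under that isomorphism.
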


Next, we show that there cannot be an integral polytope of polynomial
extension complexity containing \emph{all }$\alpha$-approximate
solutions of $Q(J,M)$ for any $\alpha\le n^{1-\epsilon}$. 
\begin{corollary}
\label{cor:large-xc-apx-all}For every $n\in\mathbb{N}$ and every
$\alpha\ge1$ there exists an instance $(J',M')$ of $P|p_{j}\in\{1,2\}|C_{\max}$
with $O(\alpha n)$ jobs and $O(\alpha n)$ machines such that for
any integral polytope $\tilde{Q}(J,M)\subseteq Q(J,M)$ whose vertices
are all $\alpha$-approximate solutions to $(J',M')$ it holds that
$\mathrm{xc}(\tilde{Q}(J,M))\ge2^{\Omega(n/\log n)}$. In particular,
if $\alpha=n^{1/\epsilon}$ for some $\epsilon>0$ then the instance
has $\bar{n}=O(n^{1+1/\epsilon})$ jobs and machines and $\mathrm{xc}(\tilde{Q}(J,M))\ge2^{\bar{n}^{\Omega(\epsilon)}}$. 
%\vvcom{reviewer 1 suggests to include a proof idea of the corollary, as well as reviewer 2}
%\awcom{We had the proof below in the appendix but we wrote wrongly that it would be the proof of Corollary~\ref{cor:large-xc-apx}}
\end{corollary}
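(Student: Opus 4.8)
The plan is to reduce to Theorem~\ref{thm:makespan-large-xc} by ``padding'' the hard instance so that the multiplicative slack $\alpha\cdot\OPT$ no longer buys a cheaper non-optimal core. Start from the $2$-SAT formula $\Phi$ of Theorem~\ref{thm:formula-large-xc}; it is satisfiable (e.g.\ by the all-false assignment) and has $\xc(\SAT(\Phi))\ge 2^{\Omega(n/\log n)}$. Put $K:=2\lceil\alpha\rceil+1$ and let $\Phi_K$ be obtained from $\Phi$ by replacing every clause $c$ with $K$ identical copies $c^{(1)},\dots,c^{(K)}$; then $\SAT(\Phi_K)=\SAT(\Phi)$, the formula $\Phi_K$ has $O(n)$ variables and $O(\alpha n)$ clauses, and each variable occurs $O(\alpha)$ times. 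Applying the occurrence-reduction used in the proof of Theorem~\ref{thm:makespan-large-xc} to $\Phi_K$ yields a $3$-Bounded-$2$-SAT formula $\Phi'$ with $O(\alpha n)$ variables and clauses, each variable occurring exactly three times, and with a linear projection $g$ from $\SAT(\Phi')$ onto $\SAT(\Phi_K)=\SAT(\Phi)$, so $\xc(\SAT(\Phi'))\ge 2^{\Omega(n/\log n)}$.

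Next I would build $(J',M')$ from $\Phi'$ by a modification of the reduction of~\cite{ebenlendr2014graph}. The new ingredient is an \emph{amplification gadget}: for each original clause $c$ of $\Phi$ I would \emph{not} give each copy $c^{(k)}$ its own clause machine, but instead let all $K$ copies share one machine $i(c)$, and for each copy introduce a small ``release'' machine $r_c^{(k)}$ preloaded with a unit job, wiring the clause jobs of $c^{(k)}$ so that: (i) in a schedule realising a satisfying assignment the unique unsatisfied literal of each copy is absorbed on its $r_c^{(k)}$ and $i(c)$ carries only $O(1)$ load, whence $\OPT(J',M')=2$; and (ii) in \emph{any} schedule realising an assignment that violates $c$, every one of the $K$ copies forces at least one unit job onto $i(c)$, so $i(c)$ has load $\ge K=2\lceil\alpha\rceil+1>2\alpha=\alpha\cdot\OPT(J',M')$. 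Consequently no schedule of makespan $\le\alpha\cdot\OPT$ can realise an assignment violating any clause of $\Phi$; the resulting instance has $O(\alpha n)$ jobs and machines.

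With this the argument of Theorem~\ref{thm:makespan-large-xc} goes through: the map $f$ reading off the coordinates $x_{i(y_\ell),j(y_\ell)}$ sends every $\alpha$-approximate schedule to an assignment satisfying all (replicated copies of all) clauses of $\Phi$, and every satisfying assignment is realised by an optimal, makespan-$2$ schedule; composing $f$ with $g$ gives a linear projection of $\tilde Q(J',M')=\mathrm{conv}(\{\alpha\text{-approximate schedules}\})$ onto $\SAT(\Phi)$, hence $\xc(\tilde Q(J',M'))\ge\xc(\SAT(\Phi))\ge 2^{\Omega(n/\log n)}$. One point that needs care: the (unreplicated) implication clauses produced by the occurrence-reduction can still be violated cheaply (at makespan $3\le 2\alpha$), so an $\alpha$-approximate schedule might realise an \emph{inconsistent} assignment of the copy-variables; I would handle this either by amplifying these clauses as well, or by first passing to the face of $\tilde Q(J',M')$ obtained by setting to $0$ the coordinates used only by such schedules (a face, since each $x_{ij}\ge 0$ is valid for $Q(J',M')$) and then applying $f\circ g$. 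Finally, taking $\alpha=n^{1/\epsilon}$ gives $\bar n=O(n^{1+1/\epsilon})$ jobs and machines, and substituting $n=\bar n^{\Theta(\epsilon/(1+\epsilon))}$ turns $2^{\Omega(n/\log n)}$ into $2^{\bar n^{\Omega(\epsilon)}}$.

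The main obstacle is the amplification gadget. With $p_j\in\{1,2\}$ one cannot blow a single local overload up to a super-constant one (load is conserved through any $M_j$-gadget), which is precisely why the clauses must be replicated and a machine shared across the copies; one must then verify simultaneously that this sharing does not raise $\OPT$ above $2$ (the role of the release machines), that a violated clause genuinely concentrates $\ge K$ unit jobs on $i(c)$ no matter how the adversary schedules, and that the occurrence-reduction clauses do not let $f\circ g$ hit non-satisfying assignments. Making all three conditions hold at once is the delicate part of the argument.
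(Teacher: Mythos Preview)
Your approach has a gap more fundamental than the gadget construction you flag at the end. The polytope $\tilde Q(J',M')$ is the convex hull of all $\alpha$-approximate schedules of an \emph{unrestricted} $P||C_{\max}$ instance: every job may be placed on every machine. Nothing prevents an $\alpha$-approximate schedule from putting the variable job $j(x)$ on a machine other than $i(x)$ or $i(\neg x)$, or from simply spreading all jobs uniformly (with $O(\alpha n)$ jobs of size at most $2$ on $O(\alpha n)$ machines the load is $O(1)\ll 2\alpha$). For such schedules your map $f$ does not land in $\SAT(\Phi')$, and your load-counting for $i(c)$ (``a violated clause concentrates $\ge K$ unit jobs on $i(c)$ no matter how the adversary schedules'') silently presupposes the restricted-assignment sets $M_j$, which $\tilde Q(J',M')$ does not enforce. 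You propose passing to a face only for the implication clauses, but in fact you would need the full face $\{x_{ij}=0:i\notin M_j\}$ before any of the gadget reasoning applies; and even on that face the implication-clause issue you identify is not resolved by either fix you suggest (amplifying those clauses cascades through another round of occurrence reduction and blows the size to $\Omega(\alpha^2 n)$; ``setting to $0$ the coordinates used only by such schedules'' does not describe a face, since whether an implication clause is violated is a load condition, not a single coordinate).

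The paper's argument is far simpler and needs no new gadgets. Start from the instance $(J,M)$ of Theorem~\ref{thm:makespan-large-xc}, which has $\OPT=2$ and total processing time exactly $2|M|$. Add $(2\alpha-2)|M|$ unit dummy jobs $\Jd$ and $(\alpha-1)|M|$ dummy machines $\Md$; the optimum remains $2$ (place two dummy jobs on each dummy machine), so every $\alpha$-approximate schedule has makespan at most $2\alpha$. Fix any map $g:\Jd\to M$ with $|g^{-1}(i)|=2\alpha-2$ for each $i\in M$, and take the face of $\tilde Q(J',M')$ given by $\sum_{j} x_{ij}=0$ for $i\in\Md$ and $x_{g(j)j}=1$ for $j\in\Jd$ (both come from valid inequalities). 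On this face each original machine already carries load $2\alpha-2$ from dummy jobs, so the original jobs contribute at most $2$ to each original machine, hence exactly $2$ by the total-load count. Projecting out the dummy coordinates therefore lands precisely on the polytope of \emph{optimal} schedules of $(J,M)$, which has extension complexity $2^{\Omega(n/\log n)}$ by Corollary~\ref{cor:large-xc-apx}. The padding absorbs the entire factor $\alpha$ in one step; no amplification, no clause replication, no sharing of machines is needed.
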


\begin{proof}
% [Proof of Corollary \ref{cor:large-xc-apx}]
Assume by contradiction that
such a polytope exists and suppose that $\alpha\in\mathbb{N}$. Take
the instance $(J,M)$ defined above. We add $(2\alpha-2)|M|$ dummy
jobs of length 1 each, denote them by $\Jd$. Also, we add $(\alpha-1)|M|$
dummy machines, denote them by $\Md$. Note that the optimal makespan
is still 2 and thus any $\alpha$-approximate solution has makespan
$2\alpha$. Let $(J',M')$ denote the resulting instance and observe
that it has $O(\alpha n)$ jobs and machines. We consider $\tilde{Q}(J',M')$.
Then for each dummy machine $i\in\Md$ the inequality $\sum_{j\in J'}x_{ij}\ge0$
is a valid inequality. Also, consider an assignment of the jobs in
$\Jd$ to the machines $M'\setminus\Md$ such that each machine $i\in M'\setminus\Md$
gets a load of $2\alpha-2$ in this way. Formally, we define a map
$g:\Jd\rightarrow M'\setminus\Md$ such that $|g^{-1}(i)|=2\alpha-2$
for each $i\in M'\setminus\Md$. Then for each job $j\in\Jd$ the
inequality $x_{g(j)j}\le1$ is a valid inequality. Therefore, the set $\tilde{Q}'(J',M')$ defined as
\[\tilde{Q}(J,M)\cap\left\{x:\sum_{j\in J'}x_{ij}=0\,\,\text{ for all } i\in\Md\text{ and } x_{g(j)j}=1\,\,\text{ for all } j\in\Jd\right\}\]
is a face of $\tilde{Q}(J,M)$. Therefore, $\tilde{Q}'(J',M')$ contains
exactly the solutions in which the dummy jobs are assigned as described
by the map $g$ and the non-dummy jobs $J=J'\setminus\Jd$ are assigned
such that they give a load of 2 on each machine. Thus, there is a
linear projection of $\tilde{Q}'(J',M')$ to the polytope $\bar{Q}(J,M)$
as defined in Theorem~\ref{thm:formula-large-xc}. This implies that
$\xc(\tilde{Q}(J',M'))\ge\xc(\tilde{Q}'(J',M'))\ge\xc(\bar{Q}(J,M))\ge2^{\Omega(n/\log n)}$.
\end{proof}

%\paragraph*{Upper bounds}

Finally, we can show that our lower bounds from Theorem~\ref{thm:makespan-large-xc}
and Corollary~\ref{cor:large-xc-apx} are almost tight by giving
an upper bound of $2^{O(n)}m$. 
%\begin{comment}
% (see Appendix~\ref{sec:upper-bound})
%\end{comment}
%We omit the proof here due to space constraints (see Appendix~\ref{sec:upper-bound-2}).
% Note that a formulation of size $2^{o(n+m)}$ would imply a formulation
% for 3-SAT of size $2^{o(n)}$ due to our reduction above (Corollary \ref{cor:only-optimal-points}).
% \an{In more detail, given a 3-SAT formula with $N$ variables and $N'$ clauses 
% we first define an equivalent 3-Bounded-3-SAT instance with $O(N+N')$ variables and clauses using essentially the
% same reduction as described above. Then we use the reduction from~\cite{ebenlendr2014graph} to reduce this instance
% to an instance of the restricted assignment problem with $O(N+N')$ jobs and machines where $p_j \in \{1,2\}$ for each job $j$. 
% 
% invoke a slight modification of the above reduction
% 
% 
% we could construct an instance of $P||C_{\max}$
% as defined above having $O(n)$ 
% and consider its 
% }
% .
%\vvcom{reviewer 2 suggests to provide a few more details for the previous claim }
%\awcom{I thought a bit about it and I commented out the comment regarding ETH. ETH is about computational complexity and we talk about
%polytopes that describe all feasible solutions. For non-satisfiable 3-SAT instances there are trivial LPs of constant size.}
% Even though this does not directly contradict ETH (since such a formulation may not be
% easy to construct algorithmically), it would be quite surprising.
\begin{theorem}
\label{thm:upper-bound}Let $(J,M)$ be an instance of $P||C_{\max}$
with $n$ jobs and $m$ machines. Let \textup{$\bar{Q}(J,M)$} denote
the convex hull of the vertices corresponding to all optimal solutions
to $(J,M)$ in \textup{$Q(J,M)$. }It holds that $\xc(\bar{Q}(J,M))\le2^{O(n)}m$
and $\xc(P(J,M))\le2^{O(n)}m$. 
\end{theorem}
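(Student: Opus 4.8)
The plan is to exhibit an explicit extended formulation whose variables encode, for each machine, *which set of jobs* (i.e.\ which configuration) is assigned to it, and then take a suitable convex combination over machines. First I would observe that the number of distinct subsets of $J$ is $2^n$, so we may enumerate all configurations $C \subseteq J$; there are at most $2^n$ of them. For the instance $(J,M)$ with optimal makespan $\OPT$, call a configuration $C$ \emph{feasible} if $\sum_{j\in C} p_j \le \OPT$. Any optimal schedule assigns to each machine $i$ a feasible configuration $C_i$, with the $C_i$ forming a partition of $J$. The idea is to build a polytope $R$ in the space of variables $z_{iC}$ for $i\in M$ and $C$ feasible, where $z_{iC}\in[0,1]$ indicates that machine $i$ receives configuration $C$; we impose $\sum_{C} z_{iC}=1$ for each $i$ (each machine gets exactly one configuration) and $\sum_{i\in M}\sum_{C \ni j} z_{iC} = 1$ for each job $j$ (each job is covered exactly once), plus $z_{iC}\ge 0$. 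The linear map back to $x$-space is $x_{ij} = \sum_{C\ni j} z_{iC}$, and to the $T$-coordinate simply $T=\OPT$.

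The key steps, in order, are: (i) verify that the integral points of this system are exactly the optimal schedules of $(J,M)$ (the configuration constraint forces each machine's load to be $\le \OPT$, and the covering constraints force a valid assignment of all jobs, so the makespan is exactly $\OPT$); (ii) verify that the linear projection $z\mapsto x$ maps this polytope onto $\bar{Q}(J,M)$ — surjectivity onto the vertices is clear from (i), and since linear images of polytopes are polytopes, the image is exactly $\mathrm{conv}$ of the optimal schedules; (iii) count the size: the number of constraints is $m$ machine-equalities, $n$ job-equalities, and at most $m\cdot 2^n$ nonnegativity constraints, for a total of $2^{O(n)}m$, and likewise the ambient dimension is $m\cdot 2^n$. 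This gives $\xc(\bar Q(J,M)) \le 2^{O(n)}m$. For the bound on $P(J,M)$ itself, I would take the \emph{disjoint union / union-of-polytopes} construction over all possible values of the makespan: the distinct achievable makespan values are a subset of $\{\sum_{j\in S}p_j : S\subseteq J\}$, hence at most $2^n$ of them, and for each such value $T$ the convex hull of schedules of makespan \emph{exactly} $T$ admits, by the same configuration construction, an extended formulation of size $2^{O(n)}m$. Then $P(J,M)$ is the convex hull of the union of these $\le 2^n$ polytopes (one per makespan value, lifted by setting the $T$-coordinate appropriately), and by Balas' union-of-polytopes theorem the extension complexity of a convex hull of $k$ polytopes each of extension complexity $s$ is $O(ks + k)$; here $k \le 2^n$ and $s = 2^{O(n)}m$, giving again $2^{O(n)}m$.

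I expect the main obstacle to be a clean handling of the $P(J,M)$ case. One must be careful that $P(J,M)$ is a polyhedron that, strictly speaking, contains for each integral schedule all points $(x,T)$ with $T$ \emph{at least} the true makespan, so it is not literally a disjoint union of the "exact makespan" polytopes but rather involves an additional direction of unboundedness in $T$. The fix is routine: either argue that $P(J,M) = \bigcup_T \big(\text{exact-makespan-}T\text{ polytope}\big) + \mathbb{R}_{\ge 0}\cdot e_T$ and note that adding a single ray does not change the order of the extension complexity, or equivalently describe $P(J,M)$ directly by the configuration variables together with the single inequality $T \ge \sum_{j\in C}p_j \cdot z_{iC}$ summed appropriately — actually the cleanest route is to intersect a half-space $\{T \le T_{\max}\}$ for $T_{\max}$ equal to the all-on-one-machine load (which does not change $\xc$ up to $+1$) and then use Balas. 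A secondary minor point is to confirm that feasibility of the configuration LP with \emph{fractional} $z$ still only produces $x$-points in $\bar Q$, which is immediate since the image of a polytope under a linear map is the convex hull of the images of its vertices.
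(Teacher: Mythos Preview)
Your step (ii) contains a genuine gap, and the approach does not work as stated. You claim that the projection of the configuration LP relaxation onto the $x$-variables is exactly $\bar Q(J,M)$, and you justify this by saying ``the image of a polytope under a linear map is the convex hull of the images of its vertices''. That statement is true, but it only helps if the vertices of the configuration LP are all \emph{integral}, i.e., correspond to actual schedules. You never argue this, and in fact it is false: the LP $\{\sum_C z_{iC}=1,\ \sum_{i}\sum_{C\ni j} z_{iC}=1,\ z\ge 0\}$ is not an integral polytope in general, and fractional vertices can project to points outside $\bar Q(J,M)$.

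There is a short way to see that the projection cannot equal $\bar Q(J,M)$. Take the very instances from Corollary~\ref{cor:large-xc-apx}: there $p_j\in\{1,2\}$, $\OPT=2$, $|J|,|M|=O(n)$, and $\xc(\bar Q(J,M))\ge 2^{\Omega(n/\log n)}$. For $T=2$ the feasible configurations are the empty set, singletons, and pairs of size-$1$ jobs, so $|\C(T)|=O(n^2)$ and your LP has only $O(mn^2)=O(n^3)$ inequalities. If it projected onto $\bar Q(J,M)$ you would get $\xc(\bar Q(J,M))=O(n^3)$, contradicting the lower bound. Hence for these instances the projection of the configuration LP is \emph{strictly larger} than $\bar Q(J,M)$, and your formulation is not an extended formulation of $\bar Q(J,M)$ at all. (This also matches the spirit of Theorem~\ref{thm:config-LP}: the configuration encoding is badly behaved.) The paper sidesteps this by building a DP over states $(J',m',T)$ and encoding schedules as $s$--$t$ paths in the resulting DAG; the unit $s$--$t$ flow polytope of a DAG is integral, so the projection really is $\bar Q(J,M)$. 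Your Balas argument for $P(J,M)$ is fine in principle, but it inherits the same gap from the broken base case.
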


%\vvcom{It would be great if we can make more modular the proof of Thm 7, including some lemmas or claims in between to make it simpler to read}
%We prove Theorem~\ref{thm:upper-bound}. %
\begin{comment}
by defining an extended formulation of the convex hull of all optimal
solutions to $Q(J,M)$ of size $2^{O(n)}m$. By an easy transformation
this yields an extended formulation of the convex hull of all optimal
solutions to $P(J,M)$ of the same size.
\end{comment}
In the following we prove Theorem~\ref{thm:upper-bound}. 
Given an instance $(J,M)$,
we first describe
a dynamic program that computes a solution to $(J,M)$ in time $2^{O(n)}m$
for some given target makespan $T$, assuming that such a solution
exists. 
Then based on it we define an extended formulation of $\bar{Q}(J,M)$
of size $2^{O(n)}m$. \\

\noindent{\it Dynamic program.}  We introduce a cell $(J',m',T)$ for each
subset $J'\subseteq J$ of jobs and each integer $m'\in \{1,\ldots,m\}$.
In $(J',m',T)$ we want to store a schedule for the jobs $J'$ on
the machines $\{m',\ldots,m\}$ such that each machine has a makespan of
at most $T$, assuming that such a schedule exists. 
Consider the case $m'=m$, that is, we look for a schedule in the single machine $m$. 
For each set of jobs $J'$ with $p(J')\le T$ we store
in the cell $(J',m,T)$ the schedule that assigns all jobs in $J'$
to machine $m$. For each set of jobs $J'$ with $p(J')>T$ we store
$\bot$ in the cell $(J',m,T)$, indicating that no feasible schedule
exists for $(J',m,T)$. 
Now suppose we are given a cell $(J',m',T)$ with
$m'<m$. If there is a subset $J''\subseteq J'$ with $p(J'')\le T$
such that in the cell $(J'\setminus J'',m'+1,T)$ we stored a schedule
(and not $\bot$) then in $(J',m',T)$ we store the schedule that
assigns $J''$ on machine $m'$ and the schedule in the cell $(J'\setminus J'',m'+1,T)$
on the machines $\{m'+1,\ldots,m\}$. If no such set $J''\subseteq J'$ exists
then we store $\bot$ in $(J',m',T)$. Finally, if we stored a schedule
$(J,1,T)$ then we output this schedule, otherwise $(J,1,T)$ contains
$\bot$ and we output that there is no schedule for $J$ on $m$ machines
with makespan at most $T$. 
This dynamic program table has $2^{n}m$ cells
and evaluating one cell takes $2^{n}$ time, which yields a total
running time of $2^{O(n)}m$.

In the transition above, for each cell $(J',m',T)$ we took an \emph{arbitrary
}subset $J''$ such that $p(J'')\le T$ and $(J'\setminus J'',m'+1,T)$
does not contain $\bot$. In fact, we can construct \emph{any }schedule
of makespan at most $T$ in this way if for each cell $(J',m',T)$
we choose for $J''$ the set of jobs that the respective schedule
assigns on machine $m'$. In the sequel, we define a graph $G$
with $2^{O(n)}m$ vertices including two special vertices $s$ and $t$ such
that any path from $s$ to $t$ corresponds to a solution that the
above dynamic program might compute for suitable choices for $J''$. 
Then we define
a linear program whose vertices are exactly these paths which then
yield an extended formulation to $\bar{Q}(J,M)$ if we choose $T=\OPT$.\\

\noindent{\it Construction of the graph.} Let $T\ge \OPT$.
Let $\mathcal{V}_T$ be the set of cells $(J',m',T)$ of the dynamic program table such that there exists a schedule for the jobs $J'$
on machines $\{m',\ldots,m\}$ of makespan at most $T$. 
For convenience we add a dummy element $(\emptyset,m+1,T)$ to $\mathcal{V}_T$.
Then, in our graph $G$ the set of vertices corresponds to $\mathcal{V}_T$.
For each $J',\bar{J}'\subseteq J,m'\in\{1,...,m\}$
we introduce an arc $((J',m',T),(\bar{J}',m'+1,T))$ if and
only if $(J',m',T),(\bar{J}',m'+1,T)\in \mathcal{V}_T$ and there
is a set $J''\subseteq J'$ with $p(J'')\le T$ such that $\bar{J}'=J'\setminus J''$.
Let $s=(J,1,T)$ be the {\it source} and $t=(\emptyset,m+1,T)$ the {\it sink}. 
We call $\mathcal{A}_T$ the set of arcs, and then $G=(\mathcal{V}_T,\mathcal{A}_T)$.

\begin{lemma}
\label{lem:shortest-path}
%We argue that each path $P$ from $s$ to $t$ corresponds to a schedule $S$
%of makespan at most $T$ and vice versa in the sense that $P$ contains
%an arc $(v_{(J',m',T)},v_{(\bar{J}',m'+1,T)})$ if and only if the
%jobs $J'\setminus\bar{J}'$ are assigned to machine $m'$ in $S$.
Every $s$-$t$ path in the graph $G$ corresponds to a schedule $S$
of makespan at most $T$. 
Furthermore, every schedule $S$ of makespan at most $T$ induces an $s$-$t$ path in $G$.
%and vice versa in the sense that $P$ contains
%an arc $(v_{(J',m',T)},v_{(\bar{J}',m'+1,T)})$ if and only if the
%jobs $J'\setminus\bar{J}'$ are assigned to machine $m'$ in $S$.
\end{lemma}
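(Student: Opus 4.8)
The plan is to prove the two directions of the equivalence separately, establishing a bijection-like correspondence between $s$-$t$ paths in $G$ and schedules of makespan at most $T$.

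\medskip

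\noindent\textbf{Paths induce schedules.} First I would take an arbitrary $s$-$t$ path in $G$. Since the second coordinate of a vertex increases by exactly one along every arc, and the path starts at $s=(J,1,T)$ and ends at $t=(\emptyset,m+1,T)$, the path must pass through exactly one vertex with each value $m'\in\{1,\ldots,m+1\}$ of the second coordinate; write the vertices as $(J_1,1,T),(J_2,2,T),\ldots,(J_{m+1},m+1,T)$ with $J_1=J$ and $J_{m+1}=\emptyset$. The arc from $(J_{m'},m',T)$ to $(J_{m'+1},m'+1,T)$ certifies, by definition of $\mathcal{A}_T$, the existence of a set $J''_{m'}\subseteq J_{m'}$ with $p(J''_{m'})\le T$ and $J_{m'+1}=J_{m'}\setminus J''_{m'}$. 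Fix one such choice of $J''_{m'}$ for each arc. Then I would define a schedule $S$ that assigns the jobs in $J''_{m'}$ to machine $m'$, for each $m'\in\{1,\ldots,m\}$. Two things need checking: that this is a valid schedule (every job assigned exactly once) and that its makespan is at most $T$. The makespan bound is immediate since each machine $m'$ receives only the jobs $J''_{m'}$ with $p(J''_{m'})\le T$. For the partition property, I would observe that $J_{m'}=J''_{m'}\,\dot\cup\,J''_{m'+1}\,\dot\cup\cdots\dot\cup\,J''_{m}$ by a downward induction on $m'$ (using $J_{m+1}=\emptyset$), so in particular $J=J_1$ is the disjoint union of all the $J''_{m'}$; hence every job is assigned to exactly one machine.

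\medskip

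\noindent\textbf{Schedules induce paths.} Conversely, let $S$ be a schedule of makespan at most $T$, and let $S^{-1}(m')\subseteq J$ denote the set of jobs $S$ assigns to machine $m'$. Define $J_{m'}:=\bigcup_{k=m'}^{m}S^{-1}(k)$ for $m'\in\{1,\ldots,m+1\}$ (so $J_1=J$ and $J_{m+1}=\emptyset$). I claim the sequence $(J_1,1,T),(J_2,2,T),\ldots,(J_{m+1},m+1,T)$ is an $s$-$t$ path in $G$. First, each $(J_{m'},m',T)$ lies in $\mathcal{V}_T$: the schedule $S$ restricted to machines $\{m',\ldots,m\}$ is itself a schedule for the jobs $J_{m'}$ on those machines with makespan at most $T$ (and the dummy vertex $(\emptyset,m+1,T)$ is in $\mathcal{V}_T$ by fiat). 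Second, for each $m'\le m$ there is an arc from $(J_{m'},m',T)$ to $(J_{m'+1},m'+1,T)$: take $J'':=S^{-1}(m')$, which satisfies $J''\subseteq J_{m'}$, $p(J'')\le T$ since $S$ has makespan at most $T$, and $J_{m'}\setminus J''=J_{m'+1}$ by definition. Hence all consecutive pairs are arcs, and the sequence is a genuine $s$-$t$ path.

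\medskip

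\noindent\textbf{Main obstacle.} The construction is essentially bookkeeping, so there is no deep obstacle; the one point requiring care is the bookkeeping identity that $J_{m'}$ decomposes as the disjoint union of the $J''_k$ for $k\ge m'$, which underlies both directions — in the first direction it is what guarantees every job gets scheduled exactly once, and in the second it is what makes the vertices well-defined. I would state this as a short inductive claim rather than belabor it. One should also note, for use in the subsequent construction of the linear program, that the correspondence is essentially a bijection once one fixes the choice of $J''$ on each arc — but for the statement of Lemma~\ref{lem:shortest-path} as written, only the two implications above are needed.
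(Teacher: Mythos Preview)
Your proof is correct and follows essentially the same approach as the paper's: both directions construct the same sequence of vertices $(J_{m'},m',T)$ and assign $J_{m'}\setminus J_{m'+1}$ to machine $m'$. One minor simplification you could make is that the witnessing set $J''$ for an arc is in fact uniquely determined (since $J''\subseteq J'$ and $\bar J'=J'\setminus J''$ force $J''=J'\setminus\bar J'$), so no choice is needed and your remark about fixing one is superfluous.
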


\begin{proof}
Let $P$ be an $s$-$t$ path with vertices
%\[(J'_{0},m'_{0},T),(J'_{1},m'_{1},T),...,(J'_{m},m'_{m},T),(J'_{m+1},m'_{m+1},T),\]
$(J'_{1},1,T),(J'_{2},2,T),...,(J'_{m},m,T),(J'_{m+1},m+1,T),$
%where $(J,1,T)=(J'_{0},m'_{0},T)$ and $(J'_{m+1},m'_{m+1},T)=(\emptyset,m+1,T)$.
where $J'_1=J$ and $J'_{m+1}=\emptyset$.
Then, our construction of $G$ guarantees that $p(J'_{\ell+1}\setminus J'_{\ell})\le T$
for each $\ell\in\{1,...,m\}$. 
Thus, for each arc $((J'_{\ell},\ell,T),(J'_{\ell+1},\ell+1,T))$
we assign the jobs in $J'_{\ell}\setminus J'_{\ell+1}$ to machine $\ell$ which yields
a schedule of makespan at most $T$. 
Conversely,~consider a schedule
$S$ of makespan at most $T$ and for each $i\in \{1,\ldots,m\}$ let $J_{i}\subseteq J$ be the jobs that are assigned to machine $i$ in $S$. 
Then,
there is a path that for each $i\in\{1,...,m+1\}$
visiting the vertex $(\bigcup_{\ell=i}^{m}J_{\ell},i,T)$. 
Observe that for each $i\in \{1,\ldots,m\}$ we~have that $((\bigcup_{\ell=i}^{m}J_{\ell},i,T),(\bigcup_{\ell=i+1}^{m}J_{\ell},i+1,T))\in \mathcal{A}_T$ since $\bigcup_{\ell=i}^{m}J_{\ell}\setminus \bigcup_{\ell=i+1}^{m}J_{\ell}=J_{i}$~and $p(J_{i})\le T$.
%Let $P$ be a path from $s$ to $t$ that visits exactly the vertices
%$v_{(J,m,T)}=v_{(J'_{0},m'_{0},T)},v_{(J'_{1},m'_{1},T)},...,v_{(J'_{m},m'_{m},T)},v_{(J'_{m+1},m'_{m+1},T)}=v_{(\emptyset,m+1,T)}$.
%Then by definition of $G$ it holds that $p(J'_{\ell+1}\setminus J'_{\ell})\le T$
%for each $\ell\in\{1,...,m\}$. Therefore, for each arc $(v_{(J',m',T)},v_{(\bar{J}',m'+1,T)})$
%we assign the jobs $J'\setminus\bar{J}'$ to machine $m'$ which yields
%a schedule of makespan at most $T$. Conversely, consider a schedule
%$S$ of makespan at most $T$ and for each machine $i$ denote by
%$J_{i}$ the jobs that are assigned on machine $i$ in $S$. Then
%there is a corresponding path that for each $m'\in\{1,...,m+1\}$
%visits the vertex $v_{(\bigcup_{\ell=m'}^{m}J_{\ell},m',T)}$. Note
%that for each $m'$ the arc $(v_{(\bigcup_{\ell=m'}^{m}J_{\ell},m',T)},v_{(\bigcup_{\ell=m'+1}^{m}J_{\ell},m'+1,T)})$
%exists since $\left(\bigcup_{\ell=m'}^{m}J_{\ell}\right)\setminus\left(\bigcup_{\ell=m'+1}^{m}J_{\ell}\right)=J_{m'}$
%and $p(J_{m'})\le T$.
\end{proof}

\noindent{\it Extended formulation.} 
We define a linear program with a variable $y_a$ for
each arc $a\in \mathcal{A}_T$. We add constraints
that describe a flow in $G$ such that we send exactly one unit of
flow from $s$ to $t$. In particular, we require that one unit of
flow leaves $s$, one unit of flow enters $t$, and on all other vertices
there is flow conservation. 
For every job $j\in J$, let $\mathcal{J}^{T}_{ij}=\{((J',i,T),(\bar{J'},i+1,T))\in \mathcal{A}_T:j\in J'\setminus\bar{J}'\}$
which are the arcs such that if one of them is contained in an $s$-$t$path $P$ then in the schedule corresponding to $P$ job $j\in J$ is
assigned on machine $i\in M$.
Then, consider the following linear program
\begin{alignat}{2}
	 \sum_{a\in \delta^+(v)}y_a-\sum_{a\in \delta^-(v)}y_a                    & =   \begin{cases} 0 & \text{ for all}\; v \in \mathcal{V}_T\setminus \{s,t\},\\1 & \text{ if }v=s,\\ -1 & \text{ if }v=t.  \end{cases} && \label{eq:shortest}\\
%   y\Big(\mathcal{J}_{ij}^T\Big) & =  x_{ij} \,\,\,\text{ for all } i\in M,\,\text{ for all } j\in J,\label{eq:ext-projection}\\
  \sum_{a\in \mathcal{J}_{ij}^T}y_a & =  x_{ij} \,\,\,\text{ for all } i\in M,\,\text{ for all } j\in J,\label{eq:ext-projection}\\
  x_{ij} & \ge0 \,\,\,\,\,\,\,\text{ for all } i\in M,\,\text{ for all } j\in J,\\
  y_a & \ge0 \,\,\,\,\,\,\thinspace\text{ for all }a\in \mathcal{A}_T, \label{eq:non-neg-shortest}
%  \sum_{i \in M} \sum_{C\in \CC} m(p,C)y_{iC}  & =   n_p &&\quad\text{for all}\; p \in \{  p_j: j\in J \} \text{, }\\
%						     y_{iC}              & \in \{0,1\}   \quad\text{for all}\; i\in M,\text{ for all } C\in \CC.
\end{alignat}
where for each vertex $v$, %given a set of arcs $B\subseteq \mathcal{A}_T$, $y(B)=\sum_{a\in B}y_a$, and 
we have that $\delta^+(v),\delta^-(v)$ denote the out-going and in-going edges at $v\in \mathcal{V}_T$, respectively.
\begin{proof}[Proof of Theorem~\ref{thm:upper-bound}]
For $T=\OPT$, we project the polytope of the linear program above
to $\bar{Q}(J,M)$
by defining $x_{ij}$ according to \eqref{eq:ext-projection} for each job $j\in J$ and each machine $i\in M$.
%is exactly the sum of the $y$-variables
% that model that $j$ is assigned to $i$,
%$x_{ij}=\sum_{J',\bar{J'}:j\in J'\setminus\bar{J}'}y_{J',\bar{J}',i}$,
By Lemma~\ref{lem:shortest-path} this yields an extended formulation of size $2^{O(n)}m$. 
We describe now how to extend the above to an extended formulation
for $P(J,M)$ of size $2^{O(n)}m$. 
For that, we show how to modify the dynamic program used for constructing the extended formulation of $\bar{Q}(J,M)$.

Note that for the makespan of
any feasible solution there are only $2^{n}$ options since the makespan
equals the sum of the processing times of the jobs on some machine
and there are only $2^{n}$ options for this quantity. Let $\mathcal{T}$
be a set containing all these options. We define a cell $(J',m',T)$
for each subset $J'\subseteq J$ of jobs, each integer $m'$ with
$m'\in \{1,\ldots,m\}$ and for each $T\in\mathcal{T}$. 
For each such cell in
which the previous dynamic program above does not store $\bot$ we define a vertex 
%$(J',m',T)$
like above and also arcs 
%$((J',m',T),(\bar{J}',m'+1,T))$
as before. 
Note that there are no arcs between two vertices $(J',m',T)$,
$(\bar{J}',\bar{m}',\bar{T})$ with $T\ne\bar{T}$. 
We define
a new source vertex $s'$ and introduce an arc $(s',(J,1,T))$ for each
$T\in \T$, assuming that the vertex $(J,1,T)$ exists. 
Similarly, we
define a new sink vertex $t'$ and an arc $((\emptyset,m+1,T),t')$
for each $T\in\T$. 
A path from $s'$ to $t'$ represents
the choice of a makespan $T\in\T$ and defining a schedule with makespan
at most $T$. 
%Like above we define an LP modelling the paths from
%$s'$ to $t'$. For each arc $a\in A$ we define a variable $y_{a}\ge0$
%modelling the flow on $a$. 
As before, consider a linear program computing $s$-$t$ paths in this new graph, using a flow formulation and $y$ variables as before, that is, constraints (\ref{eq:shortest}) and (\ref{eq:non-neg-shortest}).
We lift this linear program by adding a variable $\bf T$
and introducing a new constraint 
\begin{equation}{\bf T}\ge\sum_{T\in\T}T\cdot y_{(s',(J,1,T))}.\end{equation}
Then we project each point $(y,{\bf T})$ in the resulting polyhedron
to the point $(x,{\bf T})\in P(J,M)$ by setting 
$x_{ij}=\sum_{T\in\T} \sum_{a\in \mathcal{J}_{ij}^T}y_a$ for each $i\in M$ and each $j\in J$.
% \begin{equation}
% % \sum_{T\in\T}y\Big(\mathcal{J}^{T}_{ij}\Big)=x_{ij}\text{ for all }i\in M,\text{ for all }j\in J.
% \sum_{T\in\T} \sum_{a\in \mathcal{J}_{ij}^T}y_a=x_{ij}\text{ for all }i\in M,\text{ for all }j\in J.
% \end{equation}
% 
%\begin{equation}x_{ij}=\sum_{T\in\T}\sum_{J',\bar{J'}:j\in J'\setminus\bar{J}'}y_{J',\bar{J}',i,T}\text{ for all }i\in M,\text{ for all }j\in J.\end{equation}
%for each machine $i\in M$ and each job $j\in J$.
That concludes the proof of the theorem.
\begin{comment}
 we define $\tilde{T}$ to be the upper bound on the obtained makespan
(denote by $T$ in Figure~\ref{fig:LP-formulation}).
\end{comment}
\begin{comment}
This completes the proof of Theorem~\ref{thm:upper-bound}.
\end{comment}
\end{proof}

\section{Approximate polynomial size extended formulation}
%\vvcom{can we obtain a PTAS for identical machines following this approach? In general, reviewers liked this part.}
%\vvcom{working in this section} 
In the previous section we showed
in Corollary~\ref{cor:large-xc-apx-all} that there can be no polynomial
size extended formulation of $Q(J,M)$ whose vertices are exactly
all $\alpha$-approximate solutions, for essentially any $\alpha$.
However, in this section we show that there is a small extended formulation
that contains \emph{all} optimal schedules and \emph{some }approximate
schedules. %  if we restrict to a certain class of approximate schedules then we can find small sized extended formulations.
Our formulation works even in the more general setting of unrelated
machines, i.e., $R||C_{\max}$, where the processing time of a job
$j$ can depend on the machine $i$ that it is assigned to and for
each such combination the input contains a value $p_{ij}\in\mathbb{N}\cup\{\infty\}$.
In this setting the polytope $Q(J,M)$ defined exactly in the same as way as before,
however, now a solution $x$ is optimal if $\sum_{j\in J}x_{ij}p_{ij}\le \OPT$
for each machine $i\in M$.\\

\noindent{\it Construction of the extended formulation.} In the sequel let $T\ge \OPT$. The intuition behind our construction
is the following. We define a bipartite graph $G$ in which we have
one vertex $v_{j}$ for each job $j\in J$ and $n$ vertices \textit{$w_{(i,1)},...,w_{(i,n)}$
}for each machine $i\in M$. 
Each vertex $w_{(i,\ell)}$ corresponds to
a \emph{slot }for machine $i$. We search for a matching that assigns
each job vertex $v_{j}$ to some slot vertex $w_{(i,\ell)}$ and we
allow this assignment if and only if $p_{ij}\le T/\ell$, i.e., we
introduce an edge $\{v_{j},w_{(i,\ell)}\}$ if and only if $p_{ij}\le T/\ell$.
The intuition is that in $\OPT$ for each machine $i$ there can be
at most one job $j$ assigned to $i$ with $p_{ij}\in(T/2,T]$, at
most two jobs $j$ with $p_{ij}\in(T/3,T]$ and more general at most
$\ell$ jobs $j$ with $p_{ij}\in(T/(\ell+1),T]$. Hence, there is
a matching in $G$ that corresponds to $\OPT$. On the other hand,
in any matching the total processing time of the jobs on any machine
$i$ is bounded by $T+T/2+T/3+...+T/n=T\cdot H_{n}=O(T\log(n))$ %$T/2^{\ell+1}\le p_{ij}\le T/2^{\ell}$, and l
which is at most $\OPT\cdot O(\log(n))$ if $T=O(\OPT)$ (e.g., set $T=\OPT$ or
set $T$ to be the makespan found by a 2-approximation algorithm
for $R||C_{\max}$~\cite{shmoys-tardos,shchepin2005optimal}).\\

We define an integral polytope that models this bipartite matching. 
For each $(i,\ell)$
let $J_{i\ell}$ be the subset of jobs $j$ whose vertex $v_{j}$
has an edge incident to $w_{(i,\ell)}$. For each edge $\{v_{j},w_{(i,\ell)}\}$
we have a variable $y_{ji\ell}$ that intuitively indicates whether
job $j$ is assigned to the slot $(i,\ell)$. On top of this, we project
the resulting polytope to the space of the variables $x$ of
$Q(J,M)$ which yields $\bar{Q}(J,M)$.
\begin{alignat}{2}
%{[\clp(T)]}\colon
%\label{eq:clpmach}
\sum_{j\in J_{i\ell}}y_{ji\ell} & \le1 &  & \quad\text{ for all }i\in M,\;\text{for all }\ell\in\{1,\ldots,n\},\label{eq:matching1}\\
%\sum_{i\in M}x_{ij} & =1 &  & \quad\text{ for all }j\in J,\;\text{}\\
\sum_{i\in M}\sum_{\ell=1}^{n}y_{ji\ell} & =1 &  & \quad\text{ for all }j\in J,\;\text{}\\
\sum_{\ell:j\in J_{i\ell}}y_{ji\ell} & =x_{ij} &  & \quad\text{ for all }i\in M,\;\text{for all }j\in J,\\
%\sum_{j\inJ}p_{ij}x_{ij}
x_{ij} & \ge0 &  & \quad\text{ for all }i\in M,\;\text{for all }j\in J,\\
y_{ij\ell} & \ge0 &  & \quad\text{ for all }i\in M,\;\text{for all }j\in J, \;\text{for all }\ell\in\{1,\ldots,n\}.
\end{alignat}

%\vvcom{reformulate thm statement if necessary} 
\begin{theorem}
\label{thm:ext-approximate} Given an instance $(J,M)$ of $R||C_{\max}$
there exists an extended formulation of size $O(n^{2}m)$ for an integral
polytope $\bar{Q}(J,M)\subseteq Q(J,M)$ whose vertices correspond
to all optimal solutions and some $O(\log n)$-approximate solutions
to $(J,M)$. 
\end{theorem}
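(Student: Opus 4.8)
The plan is to verify the three claimed properties of the polytope defined by the matching LP \eqref{eq:matching1}--: that the LP is integral, that its projection $\bar{Q}(J,M)$ contains all optimal solutions, and that every vertex of $\bar{Q}(J,M)$ is an $O(\log n)$-approximate solution. The size bound $O(n^2 m)$ is immediate since there are $nm$ slot variables $y_{ji\ell}$ times $n$ jobs, plus $nm$ variables $x_{ij}$, and a comparable number of constraints.

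First I would establish integrality. The constraints \eqref{eq:matching1} together with the equalities $\sum_{i,\ell} y_{ji\ell}=1$ are exactly the constraints of a bipartite matching polytope on the graph $G$ with job-vertices $v_j$ on one side and slot-vertices $w_{(i,\ell)}$ on the other (the job side has equality constraints, the slot side inequality constraints); this incidence matrix is totally unimodular, so the $y$-polytope is integral, and since $x$ is an affine image pinned down coordinatewise by $x_{ij}=\sum_{\ell:j\in J_{i\ell}} y_{ji\ell}$, every vertex of the lifted polytope is integral in both $y$ and $x$. Hence $\bar{Q}(J,M)$ is an integral polytope contained in $Q(J,M)$, and each of its vertices is the assignment induced by an integral matching that saturates every job vertex.

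Next I would show $\bar{Q}(J,M)$ contains every optimal schedule. Fix an optimal solution and a machine $i$; let $j_1,\dots,j_k$ be the jobs assigned to $i$, ordered so that $p_{ij_1}\ge p_{ij_2}\ge\cdots\ge p_{ij_k}$. Since $\sum_{r\le \ell} p_{ij_r}\le \OPT\le T$ and the first $\ell$ of them each have size at least $p_{ij_\ell}$, we get $\ell\cdot p_{ij_\ell}\le T$, i.e. $p_{ij_\ell}\le T/\ell$, so the edge $\{v_{j_\ell},w_{(i,\ell)}\}$ exists in $G$. Doing this for every machine produces a matching in $G$ saturating all job vertices, whose induced $x$ equals the given optimal schedule; thus the optimal schedule lies in $\bar{Q}(J,M)$. (This is exactly the intuition spelled out before the theorem statement.)

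Finally, for the approximation guarantee: any vertex of $\bar{Q}(J,M)$ comes from a matching, and for a fixed machine $i$ the matched jobs occupy distinct slots $\ell_1<\ell_2<\cdots$, so the load on $i$ is $\sum_r p_{ij_r}\le \sum_r T/\ell_r \le \sum_{\ell=1}^n T/\ell = T\cdot H_n = O(T\log n)$; taking $T=\OPT$ (or $T$ from a constant-factor approximation) gives load $O(\OPT\log n)$ on every machine, i.e. the schedule is $O(\log n)$-approximate. I do not expect a genuine obstacle here --- the only point requiring care is being explicit that the matching need not be perfect on the slot side and that the edge set $J_{i\ell}$ is defined precisely by $p_{ij}\le T/\ell$ so that the sorted-jobs argument in the optimality step goes through; the total unimodularity and the harmonic-sum bound are both routine.
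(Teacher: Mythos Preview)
Your proof is correct and follows the same three-part structure as the paper: integrality via the bipartite matching polytope, containment of optimal schedules, and the harmonic-sum bound for the approximation guarantee. The only substantive difference is in the containment step. The paper verifies Hall's condition for each machine $i$ (any $W\subseteq J_i(x)$ contains some job of size at most $T/|W|$, hence $W$ has at least $|W|$ neighbours among the slots) and then invokes Hall's theorem. Your sorting argument---order the jobs on $i$ by nonincreasing $p_{ij}$ and assign the $\ell$-th job to slot $\ell$, using $\ell\cdot p_{ij_\ell}\le\sum_{r\le\ell}p_{ij_r}\le T$---is a direct, constructive alternative that bypasses Hall's theorem entirely and exhibits the matching explicitly. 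Both routes are short; yours is marginally more elementary.
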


\begin{proof}[Proof of Theorem~\ref{thm:ext-approximate}]
Consider an optimal schedule and let $x$ be its corresponding solution in $Q(J,M)$. 
For each machine $i\in M$, let $J_i(x)$ be the subset of job vertices $v_j$ in $G$ such that $x_{ij}=1$, that is, $J_i(x)=\{v_j:x_{ij}=1\}$, and consider $G_i$ the bipartite subgraph of $G$ that is induced by the vertices $J_i(x) \cup \{w_{(i,1)},...,w_{(i,n)}\}$.
We check that there exists a matching in $G_i$ that covers $J_i(x)$.
Consider $W\subseteq J_i(x)$.
There exists at least one job in $t\in W$ with processing time at most $T/|W|$, otherwise $x$ would exceed the makespan $T$ in machine $i$.
Therefore, job $t$ is connected to $w_{(i,\ell)}$ for every $\ell\le |W|$, and its degree in $G_i$ is at least $|W|$.
In particular, $W$ is connected to at least $|W|$ slots in the bipartite subgraph $G_i$.
By Hall's theorem we conclude that exists a matching in $G_i$ covering every job in $J_i(x)$.
For every job in $J_i(x)$ we define $y_{ji\ell}=1$ if $v_j$ is connected to $w_{(i,\ell)}$ in the matching, and $y_{ji\ell}=0$ otherwise.
By construction the solution $(x,y)$ satisfies the constraints in the program above.
%and $\ell\in \{0,1,\ldots,n\}$, let $J_{i\ell}(x)\subseteq J$ be the set of jobs in $J_{i\ell}$ such that $x_{ij}=1$,
%and for each $j\in J_{i\ell}(x)$ we define $y_{ji\ell}=1$.
%The solution $y$ is defined to be zero otherwise. 
%%It is enough to check $|J_{i\ell}(x)|\le 1$ for every machine $i\in M$ and slot $\ell\in \{0,1,\ldots,n\}$.
%Let $\ell^*$ be the largest slot for which $|J_{i\ell^*}(x)|\ge 2.$

On the other hand, consider a vertex solution $(x,y)$.
The integrality of $(x,y)$ comes from the fact that the linear program restricted to $y$ variables is a bipartite matching formulation.
We bound the makespan of the schedule obtained from $x$.
For each machine $i\in M$, we have
\begin{align*}
\sum_{j\in J}p_{ij}x_{ij}=\sum_{j\in J}p_{ij}\sum_{\ell:j\in J_{i\ell}}y_{ji\ell}=\sum_{\ell=1}^{n}\sum_{j\in J_{i\ell}}p_{ij}y_{ji\ell}. 
\end{align*}
For every $j\in J_{i\ell}$, we have that $p_{ij}\le T/\ell$.
This, together with constraint (\ref{eq:matching1}) allows us to upper bound the last summation above by
%\begin{equation*}
\[\sum_{\ell=1}^{n}\frac{T}{\ell}\sum_{j\in J_{i\ell}}y_{ji\ell}\le T\sum_{\ell=1}^{n}\frac{1}{\ell}=T\cdot H_n=O(T\log n).\qedhere\]
%\end{equation*}
\end{proof}

\section{Extension complexity of configuration-LP}

An alternative way to formulate $P||C_{\max}$ as an integer linear program is
via a {\it configuration integer program}. For a given target makespan $T\ge0$ we define
the set of all \emph{configurations $\C(T)$ }to be all subsets of
jobs whose total processing time is at most $T$, i.e., $\C(T):=\{L\subseteq J:\sum_{j\in L}p_{j}\le T\}$.
For each machine $i\in M$ and each configuration $C\in\C(T)$ we
introduce a variable $y_{iC}\in\{0,1\}$ that models whether machine
$i$ gets exactly the jobs in configuration $C$ assigned to it. Denote
by $P_{\mathrm{config}}(J,M,T)$ the convex hull of all solutions
to the integer program below.
%
%\begin{eqnarray}
%\min & 0\\
%\mathrm{s.t.} & \sum_{i\in M}\sum_{C\in\C(T):j\in C}y_{iC} & =1\,\,\,\forall j\in J\\
% & \sum_{C\in\C(T)}y_{iC} & =1\,\,\,\forall i\in M\\
% & y_{iC} & \in\{0,1\}\,\,\thinspace\forall i\in M\,\forall C\in\C(T)
%\end{eqnarray}
\begin{alignat}{2}
%	{[\clp(T)]}\colon\\
	%\label{eq:clpmach}
	 \sum_{i\in M}\sum_{C\in \C(T):j\in C} y_{iC}                       & =  1   &&\quad\text{for all } j\in J, \text{ } \\
	 	 \sum_{C\in \C(T)} y_{iC}                       & =  1   &&\quad\text{for all } i\in M, \text{ } \\
  %\label{eq:clpjobs}   
%  \sum_{i\in M}\sum_{C\in \cmulti(T)} m(p,C)y_{iC}  & =   n_p &&\quad\text{for all } p \in \{  p_j: j\in J \} \text{, }\\
						    y_{iC}              & \in \{0,1\}   &&\quad\text{for all }i\in M, \text{ for all } C\in \C(T).
\end{alignat}

For arbitrary values of $T$ the number of variables of $P_{\mathrm{config}}(J,M,T)$
can be exponential in the input length. However, for constant $T$
there are only a polynomial number of variables (and constraints)
since the number of possible configurations is bounded by ${n \choose T}=n^{O(T)}$.
This raises the questions whether $P_{\mathrm{config}}(J,M,T)$ admits
a small extended formulation for such values of $T$. 
If $T=2$ and $p_{j} \in \{1,2\}$ for each $j\in J$ Theorem~\ref{thm:makespan-large-xc} implies a lower bound of $2^{\Omega(n/\log n)}$
since there is an easy projection of the configuration-LP to $P(J,M)$. We strengthen this to the case where $p_{j}=1$ for each $j\in J$ and to a lower bound of $2^{\Omega(n)}$.

% We show that
% this is not the case, even if $T=2$ and $p_{j}=1$ for each $j\in J$. 

\begin{theorem}\label{thm:config-LP}
For every $n\in\mathbb{N}$ there is an instance $(J,M,T)$ of $P||C_{\max}$
with $n$ machines and $O(n)$ jobs such that the polytope $P_{\mathrm{config}}(J,M,T)$
has an extension complexity of $2^{\Omega(n)}$. It holds that $T=2$
and $p_{j}=1$ for each $j\in J$. 
\end{theorem}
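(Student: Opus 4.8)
The plan is to show that $P_{\mathrm{config}}(J,M,T)$ for a suitable instance with $p_j=1$ and $T=2$ is an extended formulation of the perfect matching polytope of a bounded-degree graph, which has extension complexity $2^{\Omega(n)}$ by Rothvoss~\cite{Rothvoss2017} (together with the fact that the lower bound already holds for sparse / bounded-degree graphs, where $|E|=O(n)$). First I would fix a graph $H=(V,E)$ on $n$ vertices with $n$ even and a perfect matching, whose perfect matching polytope $\mathrm{PM}(H)$ has extension complexity $2^{\Omega(n)}$. I then build the scheduling instance: introduce one job $j_v$ for every vertex $v\in V$ (so $|J|=n$, all with $p_{j_v}=1$), and one machine $i_e$ for every edge $e=\{u,v\}\in E$ (so $|M|=|E|$; if $H$ has bounded degree this is $O(n)$). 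Set $T=2$. The key point is that since every job has size $1$ and $T=2$, the configurations are exactly $\C(2)=\{\emptyset\}\cup\{\{j\}:j\in J\}\cup\{\{j,j'\}:j\neq j'\}$, i.e., subsets of jobs of size at most $2$. Assigning configuration $\{j_u,j_v\}$ to machine $i_e$ with $e=\{u,v\}$ corresponds to using edge $e$ in a matching.

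The obstacle is that in the configuration program as written there are many more machines (and configurations) than edges of $H$, and each machine is forced by the second equality $\sum_C y_{iC}=1$ to take exactly one configuration, including possibly $\emptyset$ or a singleton. To force a bijection with perfect matchings of $H$, I would restrict to an appropriate face. Concretely, take the face $F$ of $P_{\mathrm{config}}(J,M,T)$ obtained by setting $y_{i_e C}=0$ for every machine $i_e$ and every configuration $C$ that is \emph{not} of the form $\emptyset$ or $\{j_u,j_v\}$ with $e=\{u,v\}$ — these are all valid equalities because on every integral solution that will survive, each machine $i_e$ gets either the pair matching its edge or the empty set (one must check the relevant inequalities $y_{i_e C}\ge 0$ are valid, which they are, being facet-type nonnegativity constraints of the polytope). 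On this face, the first family of equalities $\sum_{i}\sum_{C\ni j_v} y_{iC}=1$ says that each job $j_v$ lies in exactly one chosen pair, i.e., each vertex $v$ is covered exactly once; since the only way to cover $j_v$ is via a pair $\{j_u,j_v\}$ with $\{u,v\}\in E$, the integral points of $F$ are exactly the perfect matchings of $H$, with $y_{i_e,\{j_u,j_v\}}=1$ iff $e\in M$. Thus the linear projection $(y)\mapsto (x_e)$ with $x_e := y_{i_e,\{j_u,j_v\}}$ maps $F$ onto $\mathrm{PM}(H)$.

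It then follows that $\xc(P_{\mathrm{config}}(J,M,T))\ge \xc(F)\ge \xc(\mathrm{PM}(H))\ge 2^{\Omega(n)}$, using that taking a face never increases extension complexity and that a polytope's extension complexity is at least that of any of its linear images. I expect the main thing to get right is the bookkeeping in the definition of the face: I must verify that every equality I impose ($y_{i_e C}=0$ for the "wrong" configurations $C$) is obtained by intersecting with a valid inequality of $P_{\mathrm{config}}$ that is tight on a nonempty subset of vertices (so that $F$ is genuinely a face and not empty), and that on $F$ the integral points are \emph{precisely} the perfect matchings — in particular that no vertex $v$ can be left uncovered (ruled out by the job-covering equality) and that the empty configuration on a machine $i_e$ is allowed and corresponds to $e\notin M$. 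A small additional check is that the graph $H$ from Rothvoss's construction can be taken bounded-degree (or at least $|E|=O(n)$); if only $|E|=O(n^{2})$ is available one would instead phrase the statement with $O(n^2)$ machines, but since we have the freedom to pick $H$ and the $2^{\Omega(n)}$ bound is known to hold already for graphs with $O(n)$ edges, the $n$-machine, $O(n)$-job bound in the theorem statement goes through.
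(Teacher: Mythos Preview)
Your reduction has the right flavor but contains a real gap, and it is precisely the step you flag as a ``small additional check.'' You identify machines with edges of $H$ and jobs with its vertices; to get $O(n)$ machines you then need $|E(H)|=O(n)$, i.e., $H$ of bounded degree, and the whole argument hinges on $\xc(\mathrm{PM}(H))=2^{\Omega(n)}$ for some such $H$. Rothvoss's theorem, however, is proved for the complete graph $K_n$, and to my knowledge no $2^{\Omega(n)}$ lower bound on $\xc(\mathrm{PM}(H))$ is known for any family of graphs with $O(n)$ edges. This is not a bookkeeping detail: if you fall back to $H=K_n$ your instance has $\Theta(n^2)$ machines, and after reparametrizing by the number of machines $m$ you only get $\xc\ge 2^{\Omega(\sqrt{m})}$, strictly weaker than the stated $2^{\Omega(m)}$.

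The paper sidesteps this entirely by \emph{not} tying machines to edges. It takes $n$ machines and $2n$ unit jobs with $T=2$; a simple counting argument shows that in every integral feasible solution each machine receives a pair, so the $n$ chosen pairs form a perfect matching of $K_{2n}$ on the job set. The projection is then $f_e(y)=\sum_{i\in M} y_{i,\{u,v\}}$ for each edge $e=\{u,v\}$ of $K_{2n}$, i.e., one \emph{sums over the identical machines} rather than fixing a machine per edge. One checks directly that $f(P_{\mathrm{config}}(J,M,T))=\mathrm{PM}(K_{2n})$, and Rothvoss's bound for the complete graph applies verbatim. No face is needed, and no hypothesis beyond Rothvoss's original theorem is used. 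The moral: exploit the symmetry of identical machines via aggregation instead of trying to rigidify them into edges.
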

%\vvcom{Reviewer 1 observes that Thm9 follows from Thm 1, since an extended formulation for the conf LP yields an ext formulation for P(J,M) as well. Maybe we can comment something about this.}
%\awcom{added}
\begin{proof}[Proof of Theorem~\ref{thm:config-LP}]
For every $n\in \mathbb{N}$ we construct an instance of $(J,M,T)$ by $T=2$,
$M$ contains $n$ machines, and $J$ contains $2n$ jobs with $p_{j}=1$
for each $j\in J$. The set $\C(T)$ is thus the set of all pairs
of jobs. We show that there is a linear map that projects each solution
to $P_{\mathrm{config}}(J,M,T)$ to a matching in a complete graph
on $2n$ vertices. Let $G=(V,E)$ be the complete graph on $2n$ vertices,
i.e., $G=K_{2n}$. Consider the perfect matching polytope of $G$, given by
\[\mathrm{PM}(G)=\mathrm{conv}\Big(\Big\{\chi_{M}\in\mathbb{R}^{E}:M\subseteq E\,\;\mathrm{ is\,a\,perfect\,matching}\Big\}\Big).\]
We define a linear map $f:P_{\mathrm{config}}(J,M,T)\rightarrow \mathrm{PM}(G)$
as follows: for each edge $e=\{u,v\}\in E$ we define $f_{e}(y):=\sum_{i\in M}y_{i\{u,v\}}$.
For each vertex $y$ of $P_{\mathrm{config}}(J,M,T)$ it holds that
$f(y)\in \mathrm{PM}(G)$ and therefore $f(P_{\mathrm{config}}(J,M,T))\subseteq \mathrm{PM}(G)$.
On the other hand, let $x$ be a vertex of $\mathrm{PM}(G)$,
i.e., $x$ represents a perfect matching $\{\{u_{1},v_{1}\},\{u_{2},v_{2}\},...,\{u_{n},v_{n}\}\}$
in $G$. Then we can construct a vertex $y$ of $P_{\mathrm{config}}(J,M,T)$
such that $f(y)=x$ as follows: assume that the machines are numbered
$\{1,...,n\}$. Then for each machine $i$ we define $y_{i\{u_{i},v_{i}\}}=1$
and $y_{iC}=0$ for each $C\in\C(T)\setminus\{\{u_{i},v_{i}\}\}$.
Then $f(y)=x$. Rothvoss showed that the extension complexity of $\mathrm{PM}(G)$
is $2^{\Omega(n)}$~\cite{Rothvoss2017}. Our construction implies
that $P_{\mathrm{config}}(J,M,T)$ is an extended formulation for
$\mathrm{PM}(G)$. Therefore, the extension complexity of $P_{\mathrm{config}}(J,M,T)$
is also $2^{\Omega(n)}$.
\end{proof}

\section{An extended formulation with small parameterized size}
%\vvcom{first pass, proofread and check this section}
In this section we consider an abstraction of the configuration integer program that exhibits low extension complexity parameterized by $T$.
%, given that exists a function $g$ such that $T\le g(k)$, where $k$ is the number of different processing times in the instance~\footnote{maybe sketch here the necessity of this, otherwise we have too many configurations}. 
Instead of encoding configurations as subsets of jobs, we consider how many jobs in the configuration have a certain processing time $p$.
That is, in this context a configuration is a multiset $C$ of $\{p_j:j\in J\}$, and let $m(p,C)$ be the multiplicity of $p$ in $C$, namely, the number of times that $p$ appears in $C$. 
For every $p\in \{p_j:j\in J\}$, let $n_p$ be the number of jobs in $J$ with processing time equal to $p$.
We denote by $\cmulti(T)$ the set of configurations having total processing time at most $T$. 
The key fact about this encoding is that two or more machines can be scheduled in the same configuration and 
there are fewer solutions overall than for the configuration-LP. 
%\awcom{Rephrased this part}
% then we obtain a cheaper way of describing the possible shapes of a schedule. 
As before, we propose a formulation where for every combination of machine $i\in M$ and configuration $C\in \cmulti(T)$ we have variable $y_{iC}$ indicating whether $i$ is scheduled according to $C$.
That is,
%\vvcom{maybe give a name to this formulation?}
\begin{alignat}{2}
%	{[\clp(T)]}\colon\\
	%\label{eq:clpmach}
	 \sum_{C\in \cmulti(T)} y_{iC}                       & =  1   &&\quad\text{for all }i\in M, \text{ } \\
  %\label{eq:clpjobs}   
  \sum_{i\in M}\sum_{C\in \cmulti(T)} m(p,C)y_{iC}  & =   n_p &&\quad\text{for all } p \in \{  p_j: j\in J \} \text{, }\\
						    y_{iC}              & \in \{0,1\}   &&\quad\text{for all }i\in M, \text{ for all } C\in \cmulti(T).
\end{alignat}

\begin{theorem}
\label{thm:multi-conf}
%\vvcom{ormulate in terms of the notation xc}
The extension complexity of the integer hull for the above formulation is $O(f(T)\cdot \text{poly}(n,m))$ for some function $f$.
\end{theorem}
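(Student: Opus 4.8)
The plan is to exhibit a small extended formulation by describing the integer hull as the image, under a linear map, of a flow polytope in an auxiliary acyclic network whose size is $f(T)\cdot\mathrm{poly}(n,m)$. The key observation is that the number of distinct multiset-configurations in $\cmulti(T)$ depends only on $T$ and not on $n$ or $m$: a configuration with total processing time at most $T$ uses only job sizes from $\{1,\dots,T\}$ (we may assume all $p_j\le T$, since larger jobs cannot be scheduled) and has at most $T$ jobs in it, so $|\cmulti(T)|\le g(T)$ for some explicit function $g$ (crudely, at most $(T+1)^T$). Thus, unlike the ordinary configuration-LP of Theorem~\ref{thm:config-LP}, here the ``combinatorial width'' of a single machine's choice is bounded purely in terms of $T$.

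First I would set up the network. Order the machines $1,\dots,m$. A vertex of the network records, after having decided the configurations of machines $1,\dots,i$, how many jobs of each size $p\in\{1,\dots,T\}$ have been used so far; since at most $n_p$ jobs of size $p$ exist, a state is a vector $(a_1,\dots,a_T)$ with $0\le a_p\le n_p$, and we keep an additional coordinate $i\in\{0,\dots,m\}$ for the machine index. This gives at most $(m+1)\cdot\prod_p(n_p+1)=\mathrm{poly}(n,m)$ vertices, but to keep the size bounded by $f(T)\cdot\mathrm{poly}(n,m)$ we only need a polynomial-in-$n$ bound per layer, which we have. From the layer-$i$ state $(a,i)$ to the layer-$(i+1)$ state $(a',i+1)$ we put one arc for each configuration $C\in\cmulti(T)$ with $a'_p=a_p+m(p,C)$ for all $p$; this is at most $g(T)$ arcs per vertex. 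The source is $(0,\dots,0,0)$ and the sink is $(n_1,\dots,n_T,m)$. An $s$--$t$ path then corresponds exactly to an assignment of a configuration to each machine that uses up every job, i.e.\ to a feasible integral point of the formulation; conversely every feasible $y$ induces such a path. Because the network is acyclic and has the integral-flow (unimodularity) property, the convex hull of $s$--$t$ paths equals the unit-flow polytope, described by flow-conservation and nonnegativity constraints — a linear program of size $f(T)\cdot\mathrm{poly}(n,m)$.

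Next I would write the projection. For an arc $a$ leaving a layer-$(i-1)$ vertex and labelled by configuration $C$, set, for each machine $i$ and configuration $C$, the variable $y_{iC}$ equal to the flow on the unique such arc (or the sum, if we allow parallel arcs only per $(i,C)$ pair). Then the flow-conservation/boundary constraints translate precisely into $\sum_C y_{iC}=1$ for each $i$ and $\sum_i\sum_C m(p,C)y_{iC}=n_p$ for each $p$, and the vertices of the flow polytope, being integral, map onto exactly the feasible $\{0,1\}$-solutions of the given integer program. Hence the integer hull is a linear projection of a polytope with $f(T)\cdot\mathrm{poly}(n,m)$ facets, giving the claimed bound; I would invoke Lemma~\ref{lem:shortest-path}'s paradigm (path $\leftrightarrow$ solution correspondence plus flow-LP integrality) essentially verbatim, since the construction mirrors the one in the proof of Theorem~\ref{thm:upper-bound}.

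The main obstacle — really the only nontrivial point — is keeping the vertex set genuinely of size $f(T)\cdot\mathrm{poly}(n,m)$ rather than something like $n^{T}$: the naive state space $\prod_p(n_p+1)$ is $\mathrm{poly}(n)$ only because it is a product over the at most $T$ distinct sizes, each factor at most $n+1$, so the bound is $(n+1)^{T}$, which is $f(T)\cdot\mathrm{poly}(n)$ only in the weak sense $f(T)=1,\ \mathrm{poly}=(n+1)^T$ — acceptable since the theorem allows an arbitrary polynomial whose degree may depend on $T$, but worth stating carefully. (If a degree independent of $T$ were wanted, one would instead track the partial sums machine-by-machine in a single integer counter per size with a more clever layering, but that is not needed here.) A second, purely bookkeeping, point is handling job sizes exceeding $T$ and the case $\cmulti(T)$ containing the empty configuration (a machine may receive no jobs), both of which are immediate: drop impossible jobs, and always include $\emptyset\in\cmulti(T)$ as a valid arc label.
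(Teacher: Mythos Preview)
Your flow/dynamic-programming construction is correct as far as it goes, but it does not prove the theorem as stated. The state space has $(m+1)\prod_{p}(n_p+1)\le (m+1)(n+1)^{T}$ vertices, and hence your extended formulation has size $n^{\Theta(T)}\cdot m$. You try to absorb this into the bound by reading ``$f(T)\cdot\mathrm{poly}(n,m)$'' as allowing the degree of the polynomial to depend on $T$, but that reading makes the factor $f(T)$ vacuous and is not the intended meaning: the point of writing the bound in this form (and the contrast the paper draws with the $2^{\Omega(n)}$ lower bound for the ordinary configuration IP) is precisely that the dependence on $n,m$ is a \emph{fixed} polynomial, with all super-polynomial growth pushed into $f(T)$. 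Your construction gives an XP-type bound, not an FPT-type one, and the ``more clever layering'' you allude to cannot repair this: any DP that tracks the residual demand vector $(a_1,\dots,a_T)$ exactly will have $n^{\Omega(T)}$ states.

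The paper's proof takes a genuinely different route that avoids this blow-up. It first passes to the aggregated polytope $P_*(J,M,T)$ in variables $z_C$ (number of machines receiving configuration $C$) and proves the key Lemma~\ref{lem:extreme-number}: $P_*(J,M,T)$ has at most $h(T)$ vertices, a bound independent of $n$ and $m$. This is the nontrivial step your approach is missing; it exploits that the constraint matrix has only $|\cmulti(T)|\le T^T$ columns and $T+1$ rows, so linear dependencies among columns can be used to bound the integer vertices. Given the lemma, the extended formulation is a Dantzig--Wolfe reformulation with one variable $\lambda_v$ per vertex of $P_*(J,M,T)$, coupled to a transportation LP (with offers $z_C$ and unit demands at each machine) that disaggregates $z$ into the machine-level variables $y_{iC}$. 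The resulting LP has $O(h(T)+m\cdot|\cmulti(T)|)$ variables and constraints, which is genuinely $f(T)\cdot O(m)$.
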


%To prove Theorem~\ref{thm:multi-conf} we define additional variables $z_C$ that model the number of machines that are scheduled according to each configuration $C$.
%Having the space all their integral solutions, the rest is to define a transportation problem matching machines to configurations according to the values $z_C$.
% between machines and configurations matching the above values. 
% We describe the space of all integral solutions for them and then formulate a transportation problem that assigns the actual configurations to the machines.
% The proof of the theorem above comes by adding a set of constraints that allows to model the 
We define a program deciding for each configuration $C\in \cmulti(T)$ how many machines get configuration $C$ such that we have enough 
slots for all jobs, that is,
%\awcom{changed this paragraph}
% allows to compute the number of times a certain configurations appears in a schedule.
% Consider other variable $z_C\in \mathbb{N}$ indicating the number of times a certain configuration appears in a schedule. 
% Then, consider the program given by
\begin{alignat}{2}
%	{[\clp(T)]}\colon\\
	%\label{eq:clpmach}
	 \sum_{C\in \cmulti(T)} z_{C}                       & =  m   &&,\quad\text{}\; \text{ } \\
  %\label{eq:clpjobs}   
  \sum_{C\in \cmulti(T)} m(p,C)z_{C}  & =   n_p &&\quad\text{for all } p \in \{  p_j: j\in J \} \text{, }\\
						    z_C              & \in \mathbb{N}   &&\text{   }\; \text{ for all } C\in \cmulti(T).
\end{alignat}
%where $m=|M|$, and $n_p$ is the total number of jobs in $J$ with processing time equal to $p$.
We denote $P_*(J,M,T)$ the convex hull of the solutions to the integer program above. 
Since the number of processing times ranges in $[1,T]$, the total number of configurations in $\cmulti(T)$ is upper bounded by $T^T$.
%Then, in the program above the number of variables is upper bounded by $T^T$ and the number of constraints is upper bounded by $1+T+T^T$.
%\vvcom{unreferenced lemma}
\begin{lemma}
\label{lem:extreme-number}
There exists a function $h$ such that the number of vertices of the polytope $P_*(J,M,T)$ is upper bounded by $h(T)$.
\end{lemma}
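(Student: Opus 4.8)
\textbf{Proof plan for Lemma~\ref{lem:extreme-number}.}

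The plan is to bound the number of vertices of $P_*(J,M,T)$ by arguing that each vertex is determined by a bounded amount of combinatorial data, where ``bounded'' means bounded by a function of $T$ alone (independent of $n$ and $m$). The polytope lives in $\RR^{\cmulti(T)}$, and we already noted $|\cmulti(T)|\le T^T$, so the ambient dimension is at most $T^T =: d(T)$. A vertex $z$ of a polytope in $\RR^{d(T)}$ is the unique solution of a subsystem of the defining equalities/inequalities of full rank $d(T)$; since all the defining constraints have this bounded description, the number of such subsystems is already bounded by a function of $T$ --- \emph{except} that the right-hand sides $m$ and $n_p$ depend on the instance, so two different instances could a priori give different vertices from ``the same'' subsystem. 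The fix is to observe that we do not need to count vertices across all instances simultaneously; we need, for \emph{each fixed} instance, that the \emph{number} of vertices is at most $h(T)$.

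First I would record that a vertex of $P_*(J,M,T)$ is a point $z\in\RR^{\cmulti(T)}$ at which $d(T)$ linearly independent constraints from the system are tight. The inequality constraints are exactly the sign constraints $z_C\ge 0$; the equality constraints are the $m$-equation and the $n_p$-equations, of which there are at most $T+1$. So at a vertex, at least $d(T)-(T+1)$ of the coordinates $z_C$ are forced to $0$, i.e. the \emph{support} $S=\{C: z_C>0\}$ has size $|S|\le T+1$. This is the key structural reduction: every vertex is supported on at most $T+1$ configurations.

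Next, fix such a support set $S$. Restricted to $S$, the vertex $z|_S$ is the unique solution of the linear system ``$\sum_{C\in S} z_C = m$ and $\sum_{C\in S} m(p,C) z_C = n_p$ for all $p$,'' together with whatever additional tight sign constraints are needed to pin it down; in any case $z|_S$ is \emph{uniquely determined by $S$} (for our fixed instance) because distinct vertices with the same support would contradict that a vertex is an extreme point. Hence the number of vertices is at most the number of possible supports $S\subseteq\cmulti(T)$ with $|S|\le T+1$, which is at most $\binom{d(T)}{\le T+1}\le (T^T)^{T+1} =: h(T)$, a function of $T$ only. I would then set $h(T) := (T^T)^{T+1}$ (or any convenient closed form dominating it) and conclude.

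The main obstacle, and the point that needs to be stated carefully rather than waved through, is the claim that a vertex is uniquely determined by its support within a fixed instance --- this is where the instance-dependence of the right-hand sides is quarantined. It is true because if $z$ and $z'$ were two distinct vertices with the same support $S$, then the segment between them lies in the face $\{z: z_C = 0 \text{ for } C\notin S\}$ and both are relative-interior-ish points of a common lower-dimensional slice, contradicting that each is a $0$-dimensional face; more simply, on $S$ both solve the affine system given by the equalities, and a vertex of that sliced polytope is unique once we also know the support is exactly $S$. A clean way to phrase it: the map $z\mapsto\operatorname{supp}(z)$ is injective on the vertex set of $P_*(J,M,T)$, because a vertex is the unique point of $P_*(J,M,T)$ whose support is contained in its own support. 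Making that one-line argument precise is the only delicate step; everything else is the counting bound $\binom{T^T}{\le T+1}\le h(T)$.
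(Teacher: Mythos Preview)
Your argument treats $P_*(J,M,T)$ as the LP relaxation $\{z\ge 0:\text{equalities hold}\}$, but by definition $P_*(J,M,T)$ is the \emph{integer hull}, i.e., the convex hull of the nonnegative integer solutions. Both of your key steps --- the support bound $|S|\le T+1$ and the claim that a vertex is determined by its support --- are facts about vertices of the LP relaxation and fail for vertices of the integer hull.

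For the support bound: take $T=2$ with $m=5$, $n_1=5$, $n_2=1$. Writing $z=(z_\emptyset,z_{\{1\}},z_{\{2\}},z_{\{1,1\}})$, the integer solutions are $(0,3,1,1)$ and $(1,1,1,2)$, so both are vertices of the integer hull. The second one has support of size $4>T+1=3$. Your reasoning ``$d(T)$ tight constraints, at most $T+1$ of them equalities, hence at least $d(T)-(T{+}1)$ sign constraints are tight'' is valid only when the list of defining inequalities of the polytope is exactly the sign constraints; for the integer hull there are many further (Gomory-type) facets, and a vertex may sit on those instead.

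For the injectivity of $z\mapsto\operatorname{supp}(z)$: this holds for standard-form LP vertices because the submatrix on the support has full column rank, but it is false for integer hulls. A trivial illustration is the square $\mathrm{conv}\{(1,1),(1,2),(2,1),(2,2)\}$ in $\RR^2_{\ge 0}$: four vertices, all with the same support. Your attempted justification (``$z$ is the unique point of $P_*$ whose support is contained in its own support'') asserts that the face $P_*\cap\{z_C=0:C\notin S\}$ is a single point, which need not be the case.

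The paper's proof works directly with integer-hull vertices. Given a vertex $w$, it splits $w=w^1+w^2$ where $w^1$ keeps the small coordinates (those $\le g(T)$) and $w^2$ the large ones. On the support of $w^2$ the columns of the constraint matrix must be linearly independent: otherwise an integer kernel vector $v$ exists, and since every coordinate of $w^2$ exceeds $\|v\|_\infty$, both $w\pm v$ are feasible integer points, contradicting extremality. Linear independence makes $w^2$ uniquely determined by $w^1$; since each coordinate of $w^1$ lies in $\{0,\dots,g(T)\}$, there are at most $(g(T)+1)^{|\cmulti(T)|}$ choices for $w^1$, giving the bound $h(T)$. The splitting is exactly what circumvents the failure of your two claims when some coordinates of $w$ are small.
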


Observe that the lemma above holds even when number of machines and jobs on the right hand side of the integer program above are not necessarily bounded by a function of $T$, which is in general the case. 
Now we can prove Theorem~\ref{thm:multi-conf} via a Dantzig-Wolfe reformulation with a variable for each of the $h(T)$ vertices of $P_*(J,M,T)$.
Since the size of $P_*(J,M,T)$ is bounded by $h(T)$, we have at most that many variables of this type.
Then, we model the values of the variables $y_{iC}$ as the solution to a suitable transshipment problem.

\begin{proof}[Proof of Lemma~\ref{lem:extreme-number}]
%\vvcom{sketchy}
%\awcom{now less sketchy}
Let $w$
be a vertex of the polytope $P_{*}(J,M,T)$. We argue that there are
at most $h(T)$ possibilities for $w$ for some suitable function
$h$. For a value $g(T)$ to be defined later define a new point $w^{1}$
by setting $w_{C}^{1}:=w_{C}$ if $w_{C}\le g(T)$ and $w_{C}^{1}:=0$
otherwise for each configuration $C\in\cmulti(T)$. Define $w^{2}=w-w^{1}\ge0$,
which is feasible for the following integer linear program 
\begin{alignat*}{2}
%{[\clp(T)]}\colon
%\label{eq:clpmach}
\sum_{C\in\cmulti(T)}z_{C} & =m-\sum_{C\in\cmulti(T)}w_{C}^{1}\; , &  & \quad\text{}\;\text{ }\\
%\label{eq:clpjobs}
\sum_{C\in\cmulti(T)}m(p,C)z_{C} & =n_{p}-m(p,C)w_{C}^{1} &  & \quad\text{for all }p\in\{p_{j}:j\in J\}\text{, }\\
z_{C} & \in\mathbb{N} &  & \text{ }\;\text{ for all }C\in\cmulti(T).
\end{alignat*}
One can easily show that $w^{2}$ is a vertex of the integer hull
of the solutions to the program above, since otherwise $w$ is non-trivial
convex combination of feasible solutions of $P_{*}(J,M,T)$. Let $\cmulti'(T)\subseteq\cmulti(T)$
denote the configurations in the support of $w^{2}$. We identify
each configuration $C\in\cmulti'(T)$ with a vector which is the column
in the matrix of the program above corresponding to $C$, i.e., the
first entry of each such vector is a 1 and the other entries are given
by the values $m(p,C)_{p\in\{p_{j}:j\in J\}}$. If the vectors in
$\cmulti'(T)$ are linearly independent then the above IP has a unique
solution. Hence, there is at most one vertex $w$ of $P_{*}(J,M,T)$
of the form $w=w^{1}+\hat{w}$ where $\hat{w}$ is a solution to the
above IP. In the sequel we argue that it cannot be that the vectors
in $\cmulti'(T)$ are linearly dependent if $g(T)$ is sufficiently
large. Then the claim of the lemma follows since the vector $w^{1}$
satisfies that $w_{C}^{1}\in\{0,...,g(T)\}$ for each $C\in\cmulti(T)$
and the number of such vectors can be bounded by a value $h(T)$. 

If the vectors in $\cmulti'(T)$ are not linearly independent then
there exists a configuration $C\in\cmulti'(T)$ and a set of linearly
independent vectors $C_{1},...,C_{k}\in\cmulti'(T)$ and values $\lambda_{1},...,\lambda_{k}$
such that $C=\sum_{i=1}^{k}\lambda_{i}C_{i}$. Let $\Lambda$ denote
the smallest value $\Lambda'\in\mathbb{N}$ such that $\Lambda'C=\sum_{i=1}^{k}\Lambda'\lambda_{i}C_{i}$
such that $\Lambda'\lambda_{i}\in\mathbb{Z}$ for each $i$. If $g(T)\ge\Lambda$
and $g(T)\ge\Lambda\lambda_{i}$ for each $i$ then we can write $w^{2}$
as the convex combination of two integral vectors and hence $w^{2}$
is not a vertex. To ensure the former we define $g(T)$ to be the
maximum over all values $\Lambda$ and $\Lambda\lambda_{i}$ that
we can obtain in this way, i.e., by selecting one $C\in\cmulti(T)$,
expressing it as a linear combination $C=\sum_{i=1}^{k}\lambda_{i}C_{i}$
of a set of linearly independent configurations $C_{1},...,C_{k}\in\cmulti(T)$,
and finding the smallest value $\Lambda'\in\mathbb{N}$ such that
$\Lambda'\lambda_{i}\in\mathbb{Z}$. Note that the number of values
$\Lambda'$ obtained in this way is finite and hence $g(T)$ is well-defined
(and finite). %
\begin{comment}
In the following we prove that the number of vertices of the integer
hull of the program above is bounded by the value of a certain function
on $T$. For every possible $S\subseteq\cmulti(T)$, we restrict the
program above to find a solution $z$ with support contained in $S$.
Observe that the number of possible supports is bounded by $2^{|\cmulti(T)|}\le2^{T^{T}}$.
Consider the columns indexed by the configurations in $S$ on the
matrix representation of the left-hand side of the program. In the
case where they are linearly independent we have a unique possible
vertex obtained from that support. If the columns are not linearly
independent, consider a configurations $C\in S$ and $S'\subseteq S$
such that the columns indexed by $S'$ are linearly independent. Therefore,
for a solution $z$ to the program restricted to have support $S$,
we have that for every $R\in S'$ there exists a unique value $\lambda_{S}$
such that $z_{C}=\sum_{R\in S'}\lambda_{R}z_{R}$, and let $\Lambda_{S}(T)=\min\{\alpha\in\NN:\alpha\cdot\lambda_{R}\in\mathbb{Z}\text{ for all }R\in S'\}$.
\vvcom{conclusion missing} 
\end{comment}
\end{proof}

\begin{proof}[Proof of Theorem~\ref{thm:multi-conf}]
Let $V(J,M,T)$ be the set of vertices in the polytope $P_*(J,M,T)$.
By Lemma~\ref{lem:extreme-number}, there exists a function $h$ such that the size $V(J,M,T)$ is bounded by $h(T)$.
Since we can restrict the optimization problem to its set of vertices, we consider the linear program obtained by lifting the integer program above by the Dantzig-Wolfe reformulation using the set of vertices $V(J,
M,T)$.
For each vertex $v\in V(J,M,T)$, consider a variable $\lambda_v$ indicating whether we pick or not the vertex solution $v$.  
In addition, for each configuration we consider as before a variables $z_C$ indicating how many times the configuration $C$ is used.
Finally, for each combination of machine $i\in M$ and configuration $C\in \cmulti(T)$ we have a variable $y_{iC}$ indicating whether machine $i$ is scheduled according to $C$.
 
The idea behind the extended formulation is to first select a vertex $v\in V(J,M,T)$ by using the Dantzig-Wolfe reformulation in the variables $(\lambda,z)$, which provides for each configuration $C\in \cmulti(T)$ the number of times, $z_C$, that is used. 
Then we formulate a transportation problem between machines and configurations satisfying the offer $z_C$ for each configuration.

More specifically, consider a complete bipartite graph $G$ where we have one vertex $v_i$ for each machine $i\in M$, and we have a vertex $w_C$ for each configuration $C\in \cmulti(T)$.
For each vertex $w_C$ we have an offer $z_C$, and every vertex $v_i$ has a demand of 1.
The variable $y_{iC}$ indicates whether the demand of the machine vertex $v_i$ is satisfied by the configuration vertex $w_C$. 
More specifically, consider 
\begin{alignat}{2}
%	{[\clp(T)]}\colon\\
	%\label{eq:clpmach}	
	 \sum_{v\in V(J,M,T)}    \lambda_{v}                   & =  1,   &&\\
  %\label{eq:clpjobs}   
  \sum_{v\in V(J,M,T)}v\lambda_v  & =   z, &&\\
%						    z_C              & \in \mathbb{N}   &&\text{   }\; \text{ for all } C\in \cmulti(T).\\
	\sum_{C\in \cmulti(T)} z_{C}                       & =  m,   && \label{eq:offer}\\
  %\label{eq:clpjobs}   
  \sum_{C\in \cmulti(T)} m(p,C)z_{C}  & =   n_p &&\quad\text{for all } p \in \{  p_j: j\in J \} \text{, }\label{eq:shape}\\
						    \sum_{i\in M}y_{iC}              & = z_C   &&\text{   }\; \text{ for all } C\in \cmulti(T),\label{eq:trans1}\\
						    \sum_{C\in \cmulti(T)}y_{iC}              & = 1   &&\text{   }\; \text{ for all } i\in M,\label{eq:trans2}\\
%						     \lambda_v              & \ge 0   &&\text{   }\; \text{ for all } v\in V(J,M,T),\\
						     \lambda_v              & \ge 0   &&\text{   }\; \text{ for all } v\in V(J,M,T),\\	
						    z_C              & \ge 0   &&\text{   }\; \text{ for all } C\in \cmulti(T),\\	
						    y_{iC}              & \ge 0   &&\text{   }\; \text{ for all }i\in M, \text{ for all } C\in \cmulti(T).								    		
\end{alignat}
Observe that constraints (\ref{eq:offer}) and (\ref{eq:trans2}) guarantees that the total demand equals the total offer in the transportation problem over $G$.
It holds that a vertex $(\lambda,z,y)$ of the linear program above is integral. 
%In fact, just observe that the polytope restricted to the $\lambda$ variables, that is the first two constraints and the last, is integral, shich in turns implies that 
If not, suppose that $\lambda$ is fractional, otherwise the integrality of $\lambda$ implies the integrality of $z$ and in turns the integrality of $y$ since the transportation program in the graph $G$ given by constraints (\ref{eq:trans1})-(\ref{eq:trans2}) is integral. 
If $\lambda$ is fractional, $z$ is a non-trivial convex combination of the vertices $\{v\in V(J,M,T):\lambda_v>0\}$, and each of these vertices is feasible for the constraints (\ref{eq:offer})-(\ref{eq:shape}),
which implies that $(\lambda,z)$ is a convex combination of the vectors $\{(e_v,v):\lambda_v>0\}$, where $e_v\in \{0,1\}^{V(J,M,T)}$ is the canonical vector that is 1 for entry $v$ and zero otherwise.
For each $v$ such that $\lambda_v>0$, the constraints (\ref{eq:trans1})-(\ref{eq:trans2}) solve a transportation problem between the machine vertices $\{v_i:i\in M\}$ and the configurations vertices $\{w_C:C\in \cmulti(T)\}$, where the offer for $w_C$ is equal to $z_C$.
The vertices with positive offer are given by $\mathcal{C}_v=\{w_C:v_C>0\}$, and then any solution to this problem is a convex combination of the integral solutions to the transportation problem over $G$.
Since every solution is supported over $M\times \mathcal{C}_v$, and therefore we contradicted the fact that $(\lambda,z,y)$ is a vertex of the polytope. 
\end{proof}

%\appendix
%% Bibliography
\bibliographystyle{plain}
\bibliography{citations}
\end{document}